\newtheorem{theorem}{\textbf{Theorem}}
\newtheorem{proposition}{\textbf{Proposition}}
\newtheorem{definition}{\textbf{Definition}}
\newtheorem{corollary}{\textbf{Corollary}}
\newtheorem{lemma}{\textbf{Lemma}}
\newcommand{\Expt}{\mbox{${\mathbb E}$} }
\begin{document}
\date{}

\title{Approximate Characterizations for the Gaussian Source Broadcast Distortion Region}

\author{Chao Tian,~\IEEEmembership{Member,~IEEE}, Suhas Diggavi,~\IEEEmembership{Member,~IEEE}, \\and Shlomo Shamai (Shitz),~\IEEEmembership{Fellow,~IEEE}
\thanks{The material in this paper was presented in part at the IEEE International Symposium on Information Theory, Seoul, Korea, June-July 2009.}
\thanks{The work of S. Shamai was supported by the European Commission in the framework of the FP7 Network of Excellence in Wireless COMmunications, NEWCOM++.}
\thanks{C. Tian is with AT\&T Labs-Research, Florham Park, NJ 07932, USA. (email: tian@research.att.com)}
\thanks{S. N. Diggavi is with the Department of Electrical Engineering, University of
California, Los Angeles, CA 90095, USA. (email: suhas@ee.ucla.edu)}
\thanks{S. Shamai is with the Department of Electrical Engineering, Technion--Israel Institute of Technology, Haifa 32000, Israel. (email: sshlomo@ee.technion.ac.il)}
}

\maketitle

\begin{abstract}
We consider the joint source-channel coding problem of sending a Gaussian source on a $K$-user Gaussian broadcast channel with bandwidth mismatch. A new outer bound to the achievable distortion region is derived using the technique of introducing more than one additional auxiliary random variable, which was previously used to derive sum-rate lower bound for the symmetric Gaussian multiple description problem. By combining this outer bound with the achievability result based on source-channel separation, we provide approximate characterizations of the achievable distortion region within constant multiplicative factors. Furthermore, we show that the results can be extended to general broadcast channels, and the performance of the source-channel separation based approach is also within the same constant multiplicative factors of the optimum. 
\end{abstract}
\begin{keywords}
Gaussian source, joint source-channel coding, squared error distortion.
\end{keywords}

\section{Introduction}
\label{sec:intro}

Shannon's source-channel separation theorem essentially states that asymptotically there is no loss from optimum by decoupling the source coding component and channel coding component in a point-to-point communication system \cite{Shannon:48}. This separation result tremendously simplifies the concept and design of communication systems, and it is also the main reason for the division between research in source coding and channel coding. However, it is also well known that in many multi-user settings, such a separation indeed incurs certain performance loss; see, \textit{e.g.}, \cite{Goblick:65,Cover:80,Gastpar:03}. For this reason, joint source-channel coding has attracted an increasing amount of attention as the communication systems become more and more complex.

One of the most intriguing problems in this area is joint source-channel coding of a Gaussian source on a Gaussian broadcast channel with $K$ users under an average power constraint. It was observed by Goblick \cite{Goblick:65}  that when the source bandwidth and the channel bandwidth are matched, \textit{i.e.}, one channel use per source sample, directly sending the source samples on the channel after a simple scaling is in fact optimal, but the separation-based scheme suffers a performance loss \cite{Gastpar:03}. However, when the source bandwidth and the channel bandwidth are not matched, such a simple scheme is no longer optimal. Many researchers have considered this problem, and significant progress has been made toward finding better coding schemes based on hybrid digital and analog signaling; see, \textit{e.g.}, \cite{ShamaiVerduZamir:98,Mittal:02,Skoglund:06,ReznicFederZamir:06,Prabhakaran:05,NarayananCaireWilson:07} and the references therein.

In spite of the progress on the achievability schemes, our overall understanding on this problem is still quite limited. As pointed out by Caire \cite{Caire:ITA06}, the key difficulty appears to be finding meaningful outer bounds. Such outer bounds not only can provide a concrete basis to evaluate various achievability schemes, but also may provide insights into the structure of good or even optimal codes, and may further suggest simplification of the possibly quite complex optimal schemes in certain distortion regimes. In this regard, the result by Reznic \textit{et al.} \cite{ReznicFederZamir:06} is particularly important, where they derived a non-trivial outer bound for the achievable distortion region for the two-user system. This outer bound relies on a technique previously used in the multiple description problem by Ozarow \cite{Ozarow:80}, where one additional random variable beyond those in the original problem is introduced. The bound given in \cite{ReznicFederZamir:06} is however rather complicated, and was only shown to be asymptotically tight for certain high signal to noise ratio regime. 

In this work, we derive an outer bound for the $K$-user problem using a similar technique as that used in \cite{ReznicFederZamir:06}, however, more than one additional random variable is introduced. The technique used here also bears some similarity to that used in \cite{Tian:08}. The outer bound has a more concise form than the one given in \cite{ReznicFederZamir:06}, but for the $K=2$ case, it can be shown that they are equivalent. This outer bound is in fact a set of outer bounds parametrized by $K-1$ non-negative variables. Though one can optimize over these variables to find the tightest one, this optimization problem appears difficult. Thus we take an approach similar to the one taken in \cite{Tian:08}, and choose some specific values for the $K-1$ variables which gives specific outer bounds. Moreover, by combining these specific outer bounds with the simple achievability scheme based on source-channel separation, we provide approximate characterizations\footnote{We would like to thank David Tse for discussions in ITA 2008 on the formulation of the question, where he was motivated by his solution to a deterministic version of this problem.} of the achievable distortion region within some universal constant multiplicative factors, independent of the signal to noise ratio and the bandwidth mismatch factor. In one of the approximations, the multiplicative factor is roughly of form $2^k$ for the distortion at the $k$-th user, while in the other, the factor is $K$ for all the distortions. Thus although Shannon's source-channel separation result does not hold strictly in this problem, it indeed holds in an approximate manner. In fact, this set of results is extremely flexible, and can be applied in the case with an infinite number of users but the minimum achievable distortion is bounded away from zero, for which we can conclude that the source-channel separation based approach is also within certain finite constant multiplicative factors of the optimum. In this case, these constants can be upper bounded by factors related to the disparity between the best and worse distortions, which is not influenced by the number of users being infinite. 

Though the outer bound is derived using techniques that have some precedents in the information theory literature, the difficulty lies in determining which terms to bound. In contrast to pure source coding problems or pure channel coding problems, where we can usually meaningfully bound a linear combination of rates, in a joint source-channel coding problem the notion of rates does not exist. In  \cite{ReznicFederZamir:06}, the lower bound on one distortion is given in terms of a function of the other distortion in the two-user problem. It is clear that such a proof approach becomes unwieldy for the general $K$-user case. In this work, we instead derive bounds for some quantity which at the first sight may seem even unrelated to the problem, but eventually serves as an interface between the source and channel coding components, thus replacing the role of ``rates'' in traditional Shannon theory proofs.

Inspired by a recent work of Avestimehr, Caire and Tse \cite{Avestimehr:08}, where source-channel separation in more general networks is considered, we further show that our technique can be conveniently extended to general broadcast channels, and the source-channel separation based scheme is within the same multiplicative constants of the optimum as for the Gaussian channel case. 

The rest of the paper is organized as follows. Section \ref{sec:definition} gives the necessary notation and reviews an important lemma useful in deriving the outer bound. The main results are presented in Section \ref{sec:mainresult}, and the proofs for these results are given in Section \ref{sec:proof}. The extension to general broadcast channels is given in Section \ref{sec:general}, and Section \ref{sec:conclusion} concludes the paper. 

\section{Problem Definition and Review}
\label{sec:definition}

In this section, we give a formal definition of the Gaussian source broadcast problem in the context of Gaussian broadcast channels; the notation will be generalized in Section \ref{sec:general} when other broadcast channels are considered. 

Let $\{S(i)\}_{i=1,2,...}$ be a stationary and memoryless Gaussian source with zero-mean and unit-variance. The
vector $(S(1),S(2),...,S(m))$ will be denoted as $S^m$. We use $\mathbb{R}$ to denote the domain of reals, and $\mathbb{R}_+$ to denote the domain of non-negative reals. The Gaussian memoryless broadcast channel is given by the model 
\begin{align}
Y_k=X+Z_k,\quad k =1,2,\ldots,K,
\end{align}
where $Y_k$ is the channel output observed by the $k$-th receiver, and $Z_k$ is the zero-mean additive Gaussian noise on the channel input $X$. Thus the channel is memoryless in the sense that $(Z_1(i),Z_2(i),\ldots,Z_K(i))_{i=1,2,\ldots}$ is a stationary and memoryless process. The variance of $Z_k$ is denoted as $N_k$, and without loss of generality, we shall assume 
\begin{align}
N_1\geq N_2\geq \ldots\geq N_K.
\end{align}
The mean squared error distortion measure is used, which is given by  
$d(s^m,\hat{s}^m)=\frac{1}{m}\sum_{i=1}^m(s(i)-\hat{s}(i))^2$. The encoder maps a source sample block of length $m$ into a channel input block of length $n$, and each decoder maps the corresponding channel output block of length $n$ into a source reconstruction block of length $m$. The bandwidth mismatch factor is thus defined as
\begin{align}
b=\frac{n}{m},
\end{align}
which is essentially the (possibly fractional) channel uses per source sample; see Fig. \ref{fig:systemdiag}. The channel input is subject to an average power constraint.

\begin{figure}[tb]
\begin{centering}
\includegraphics[width=8cm]{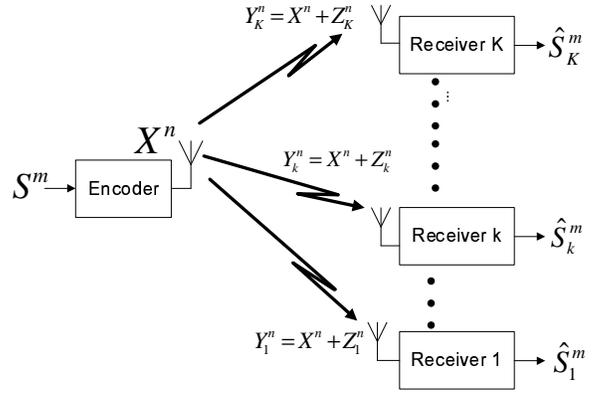}
\caption{Joint source-channel coding with bandwidth mismatch.\label{fig:systemdiag}}
\end{centering}
\end{figure}

We can make the codes in consideration more precise by introducing the following definition.
\begin{definition}
An $(m,n,P,d_1,d_2,\ldots,d_K)$ Gaussian source-channel broadcast code is given by an encoding function
\begin{align}
f:\mathbb{R}^m\rightarrow \mathbb{R}^n,
\end{align}
such that
\begin{align}
\frac{1}{n}\sum_{i=1}^n\Expt (X(i))^2\leq P,
\end{align}
and $K$ decoding functions
\begin{align}
g_k:\mathbb{R}^n\rightarrow \mathbb{R}^m,\quad k =1,2,\ldots,K,
\end{align}
and their induced distortions
\begin{align}
d_k=\Expt d(S^m,g_k(f(S^m)+Z^n_k)),\quad k=1,2,\ldots,K,
\label{eqn:expectation}
\end{align}
where $\Expt(\cdot)$ is the expectation operation.
\end{definition}

Note that there are two kinds of independent randomness in the system, the first of which is by the source, and the second is by the channel noises; the expectation operation in (\ref{eqn:expectation}) is taken over both of them. In the definition, $+$ in the expression $f(S^m)+Z^n_k$  is understood as the length-$n$ vector addition.

From the above definition, it is clear that the performance of any Gaussian joint source-channel code depends only on the marginal distribution of $(S^m,X^n,Y^n_k)$, but not the joint distribution $(S^m,X^n,Y^n_1,Y^n_2,\ldots,Y^n_K)$. This implies that physical degradedness does not differ from statistical degradedness in terms of the system performance. Since the Gaussian broadcast channel is always statistically degraded, we shall assume physical degradedness from here on without loss of generality. The channel noises can thus be written as
\begin{align}
Z_k=Z_{k+1}+\Delta Z_{k},\quad k=1,2,3,\ldots, K,
\end{align}
where $\Delta Z_k$ is a zero-mean Gaussian random variable with variance $\Delta N_k=N_k-N_{k+1}$, which is independent of everything else; for convenience, we define $Z_{K+1}\triangleq 0$, and it follows that $\Delta N_K=N_K$ and $Y_{K+1}=X$.

\begin{definition}
\label{def:distortionvector}
A distortion vector $(D_1,D_2,\ldots,D_K)\in \mathbb{R}_+^K$, where $1\geq D_1\geq D_2\geq\ldots\geq 0$ is achievable under power constraint $P$ and bandwidth mismatch factor $b$, if for any $\epsilon>0$ and sufficiently large $m$, there exist an integer $n\leq bm$ and an $(m,n,P,d_1,d_2,\ldots,d_K)$ Gaussian source-channel broadcast code such that
\begin{align}
D_i+\epsilon\geq d_i,\quad i=1,2,\ldots,K.
\end{align}
\end{definition}

Note that the constraint $1\geq D_1\geq D_2\geq\ldots\geq 0$ is without loss of generality, because otherwise the problem can be reduced to an alternative one with fewer users due to the assumed physical degradedness. The collection of all the achievable distortion vectors under power constraint $P$ and bandwidth mismatch factor $b$ is denoted by $\mathcal{D}(P,b)$, and this is the region that we are interested in.

One important result we need in this work is the following lemma, which is a slightly different version of the one given in \cite{Tian:08}. 
\begin{lemma}
\label{lemma:difference}
Let $W$ be a random variable jointly distributed with the Gaussian source vector $S^m$ in the alphabet $\mathcal{W}$, such that there exists a deterministic mapping $g:\mathcal{W}\rightarrow \mathbb{R}^m$ satisfying
\begin{align}
\Expt d(S^m,g(W))\leq D.
\end{align}
Let $U=S+V$ and $U'=S+V+V'$, where $V$ and $V'$ are mutually independent Gaussian random variables independent of the Gaussian source $S$ and the random variable $W$, with variance $\sigma^2$ and $\sigma'^2$, respectively. Then with $\sigma^2\triangleq\tau$ and $\sigma^2+\sigma'^2\triangleq\tau'$, we have
\begin{enumerate}
\item \textbf{Mutual information bound}
\begin{eqnarray}
\label{eqn:lemmabound1}
I(W;U'^m)\geq \frac{m}{2}\log\frac{1+\tau'}{D+\tau'},
\end{eqnarray}
\item \textbf{Bound on mutual information difference}
\begin{align}
&I(W;U^m)-I(W;U'^m)\geq  \frac{m}{2}\log\frac{(1+\tau)(D+\tau')}{(1+\tau')(D+\tau)}.\label{eqn:lemmabound2}
\end{align}
\end{enumerate}
\end{lemma}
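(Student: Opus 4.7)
My plan is to bound both quantities by expanding them as differences of differential entropies, using the Gaussian maximum-entropy property and Jensen's inequality to upper bound the conditional entropies, and invoking the conditional entropy power inequality (EPI) for the difference estimate in part~2.

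For part~1, I would write $I(W;U'^m)=h(U'^m)-h(U'^m|W)$. Since $S^m$ is i.i.d.\ $\mathcal{N}(0,1)$ and $V+V'$ is independent $\mathcal{N}(0,\tau')$, the unconditional entropy is exactly $h(U'^m)=\tfrac{m}{2}\log 2\pi e(1+\tau')$. For the conditional term, set $\hat{S}^m=g(W)$ so the hypothesis reads $\tfrac{1}{m}\sum_i \mathbb{E}(S(i)-\hat{S}(i))^2\leq D$. Subadditivity of conditional entropy, combined with the Gaussian maximum-entropy property applied per coordinate to $U'(i)-\hat{S}(i)$ given $W$, gives
\begin{align*}
h(U'^m|W)\leq \sum_{i=1}^m \tfrac{1}{2}\log 2\pi e\bigl(\mathbb{E}(S(i)-\hat{S}(i))^2+\tau'\bigr),
\end{align*}
and concavity of $\log$ (Jensen) converts this to $\tfrac{m}{2}\log 2\pi e(D+\tau')$. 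Combining yields (\ref{eqn:lemmabound1}).

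For part~2, I would decompose
\begin{align*}
I(W;U^m)-I(W;U'^m) = [h(U^m)-h(U'^m)] + [h(U'^m|W)-h(U^m|W)].
\end{align*}
The first bracket is exactly $\tfrac{m}{2}\log\frac{1+\tau}{1+\tau'}$. For the second, note that $V'^m$ is independent of $(W,U^m)$, so the conditional EPI applied to $U'^m=U^m+V'^m$ gives $\exp(2h(U'^m|W)/m)\geq \exp(2h(U^m|W)/m)+2\pi e\sigma'^2$. Setting $N:=\exp(2h(U^m|W)/m)/(2\pi e)$, this rearranges to $h(U'^m|W)-h(U^m|W)\geq \tfrac{m}{2}\log(1+\sigma'^2/N)$. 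Re-running the Jensen argument of part~1 with $U$ in place of $U'$ shows $N\leq D+\tau$. Since the map $N\mapsto\log(1+\sigma'^2/N)$ is monotonically decreasing, the lower bound is smallest at the upper endpoint $N=D+\tau$, giving $\tfrac{m}{2}\log\frac{D+\tau'}{D+\tau}$. Adding the two brackets produces (\ref{eqn:lemmabound2}).

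The main subtlety is the monotonicity step in part~2: the EPI produces a lower bound on the conditional entropy difference that \emph{grows} as the conditional entropy power $N$ \emph{shrinks}, whereas the distortion hypothesis gives only an upper bound on $N$. Fortunately these push in compatible directions, so substituting the worst case $N=D+\tau$ produces exactly the claimed bound; no sharper argument is required. A secondary routine point is that the subadditivity step $h(U'^m|W)\leq\sum_i h(U'(i)|W)$ does not require conditional independence of the $\hat{S}(i)$'s given $W$ — only the average per-coordinate distortion enters once Jensen is invoked.
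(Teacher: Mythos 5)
Your proposal is correct. Part 1 is essentially the paper's own argument: translate by $\hat{S}^m=g(W)$, use conditioning/subadditivity, the Gaussian maximum-entropy bound per coordinate (the cross term vanishes because $V+V'$ is independent of $(S^m,W)$), and Jensen. Part 2, however, takes a genuinely different route. The paper writes $h(U'^m|W)-h(U^m|W)=I(U'^m;V'^m|W)$, single-letterizes this conditional mutual information down to $\sum_i I\bigl(V'(i);S(i)-\hat{S}(i)+V(i)+V'(i)\bigr)$, and then invokes the mutual-information-game result that Gaussian noise is the worst additive noise under a variance constraint, finishing with convexity and monotonicity of $x\mapsto\log\frac{x+\tau'}{x+\tau}$. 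You instead stay at the block level and apply the conditional entropy power inequality to $U'^m=U^m+V'^m$ (valid since $V'^m$ is independent of $(U^m,W)$, hence conditionally independent given $W$, and $h(V'^m|W)=\frac{m}{2}\log 2\pi e\sigma'^2$), then combine the resulting bound $\frac{m}{2}\log(1+\sigma'^2/N)$ with the part-1-style estimate $N\leq D+\tau$ and the monotonicity observation you flag. Your route is essentially the classical Ozarow/Reznic--Feder--Zamir EPI technique and avoids any single-letterization of the conditional term, at the price of invoking the conditional EPI; the paper's route avoids the EPI in this lemma (reserving it for the channel-side bounding in Theorem 2) and instead leans on the worst-additive-noise lemma, which makes the per-coordinate distortions $\Expt d(S(i),\hat{S}(i))$ appear explicitly before a final Jensen step. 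Both yield exactly (\ref{eqn:lemmabound2}), and your monotonicity step is handled correctly, so there is no gap.
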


The proof of this lemma is almost identical to the one given in \cite{Tian:08}. The only difference between the two versions is that in \cite{Tian:08} the random variable $W$ is in fact a deterministic function of $S^m$, however it is rather straightforward to verify that this condition was never used in the proof given in \cite{Tian:08}; we include the proof of this lemma in the Appendix for completeness. 

\section{Main Results for Gaussian Broadcast Channels}
\label{sec:mainresult}

Our main results for Gaussian source broadcast on Gaussian broadcast channels are summarized in Theorem \ref{theorem:maintheorem}, Corollary \ref{corollary:firstcorollary},  Proposition \ref{proposition:firstcorollary}, Corollary \ref{corollary:infiniteusers} and Corollary \ref{corollary:secondcorollary}, the proofs of which are given in the next section; extensions of these results to general broadcast channels are given in Section \ref{sec:general}.

Define the region in (\ref{eqn:innerbound}) on the top of next page, which is in fact the inner bound via source-channel separation. Next define the regions in (\ref{eqn:firstoutbound}) and (\ref{eqn:secondouterbound}) also on the top of next page, which are in fact outer bounds to the achievable distortion region. We have the the following theorem.

\begin{figure*}[tb]
\normalsize
\newcounter{MYtempeqncnt}
\setcounter{MYtempeqncnt}{\value{equation}}
\addtocounter{MYtempeqncnt}{1}
\begin{align}
&\hat{\mathcal{D}}(P,b)\triangleq\left\{(D_1,D_2,\ldots,D_K):\sum_{k=1}^K \Delta N_k D_k^{-\frac{1}{b}}\leq P+N_1,\quad 1\geq D_1\geq D_2 \geq\ldots\geq D_K\geq 0 \right\}.
\label{eqn:innerbound}\\
&\underline{\mathcal{D}}^*(P,b)\triangleq\left\{(D_1,D_2,\ldots,D_K):\sum_{k=1}^K \Delta N_k (2^kD_k)^{-\frac{1}{b}}\leq P+N_1,\quad 1\geq D_1\geq D_2 \geq\ldots\geq D_K\geq 0\right\}.\label{eqn:firstoutbound}\\
&\underline{\mathcal{D}}(P,b)\triangleq\left\{(D_1,D_2,\ldots,D_K):\sum_{k=1}^K \Delta N_k (KD_k)^{-\frac{1}{b}}\leq P+N_1,\quad 1\geq D_1\geq D_2 \geq\ldots\geq D_K\geq 0\right\}.\label{eqn:secondouterbound}
\end{align}
\hrulefill
\end{figure*}
\setcounter{equation}{\value{MYtempeqncnt}}
\addtocounter{equation}{2}

\begin{theorem}\label{theorem:maintheorem}
\begin{align}
\hat{\mathcal{D}}(P,b)\subseteq \mathcal{D}(P,b)\subseteq  \underline{\mathcal{D}}^*(P,b)\cap \underline{\mathcal{D}}(P,b).
\end{align}
\end{theorem}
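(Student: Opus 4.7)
\emph{Inner bound.} The plan is to exhibit a source-channel separation scheme: a Gaussian successive-refinement source code produces $K$ layers with cumulative rate $\tfrac{1}{2b}\log(1/D_k)$ bits per channel use reconstructing $S^m$ at distortion $D_k$, transmitted via superposition coding on the degraded Gaussian broadcast channel with a power split $P_1,\ldots,P_K$ such that user $k$ successively decodes layers $1$ through $k$. Equating the per-layer source rate $\tfrac{1}{2b}\log(D_{k-1}/D_k)$ (with $D_0=1$) to the per-layer channel rate $\tfrac{1}{2}\log\bigl((N_k+\sum_{j\ge k}P_j)/(N_k+\sum_{j>k}P_j)\bigr)$ and setting $Q_k\triangleq N_k+\sum_{j\ge k}P_j$ gives the recursion $Q_k=(D_{k-1}/D_k)^{1/b}(Q_{k+1}+\Delta N_k)$ with $Q_{K+1}=0$, which unfolds to $Q_1=\sum_{j=1}^K\Delta N_j D_j^{-1/b}$. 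The power budget $Q_1=N_1+P$ is then exactly the defining inequality of $\hat{\mathcal{D}}(P,b)$.

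\emph{Outer bound.} The plan is to adapt the auxiliary-variable technique from \cite{Tian:08}. I would introduce $K$ nested Gaussian auxiliary sources $U_k^m=S^m+\tilde V_k^m$ with independent-increment structure $U_{k-1}^m=U_k^m+\Delta\tilde V_{k-1}^m$ and variances $\tau_1>\cdots>\tau_K>0$ to be specialized. For each $k$, Lemma~\ref{lemma:difference} applied with $W=Y_k^n$ and $D=D_k$ yields lower bounds on $I(Y_k^n;U_k^m)$ and on the differences $I(Y_k^n;U_k^m)-I(Y_k^n;U_{k-1}^m)$ in terms of $D_k$ and the $\tau_j$'s. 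Exploiting physical degradedness via $Y_{k-1}^n=Y_k^n+\Delta Z_{k-1}^n$, where $\Delta Z_{k-1}^n$ is independent of $(X^n,S^m,U_1^m,\ldots,U_K^m)$, the conditional entropy-power inequality gives $2^{2h(Y_{k-1}^n\mid U_1^m,\ldots,U_K^m)/n}\ge 2^{2h(Y_k^n\mid U_1^m,\ldots,U_K^m)/n}+2\pi e\,\Delta N_{k-1}$. Telescoping this relation from $k=K$ down to $k=1$, with the per-user Lemma~\ref{lemma:difference} contributions inserted at each step, assembles a lower bound on $2^{2h(Y_1^n)/n}$, which combined with the Gaussian maximum-entropy bound $h(Y_1^n)\le\tfrac{n}{2}\log 2\pi e(P+N_1)$ yields the claimed power inequality. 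Two specializations of the $K-1$ free ratios of the $\tau_k$'s produce the factors $c_k=2^k$ and $c_k=K$, so intersecting the corresponding regions yields $\underline{\mathcal{D}}^*(P,b)\cap\underline{\mathcal{D}}(P,b)$.

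\emph{Main obstacle.} The delicate step is orchestrating the recursion so that, on the source side, the Lemma~\ref{lemma:difference} bounds contribute a $D_k^{-1/b}$ factor at position $k$ and, on the channel side, the conditional-EPI telescope contributes a $\Delta N_k$ factor at the same position, with the two multiplying cleanly rather than mis-aligning across the chain. The conditional entropy power $2^{2h(Y_k^n\mid U_1^m,\ldots,U_K^m)/n}$ is the quantity that, as hinted in the introduction, serves as the ``interface'' between source and channel coding, replacing the coding-rate variable of a classical separation-style converse. Any choice of the $\tau_k$'s leaves a multiplicative slack, and the factors $c_k\in\{2^k,\,K\}$ quantify the unavoidable cost of two natural balancings -- one geometric in $k$, the other uniform across users.
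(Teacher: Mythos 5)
Your inner bound and the overall architecture of your converse (nested Gaussian auxiliaries $U_k^m=S^m+\tilde V_k^m$, Lemma~\ref{lemma:difference} applied with $W=Y_k^n$, an EPI step exploiting degradedness, and a final maximum-entropy/power bound on $h(Y_1^n)$) are indeed the paper's approach. But two steps, as written, do not work. First, your conditional EPI conditions every entropy on the full set $(U_1^m,\ldots,U_K^m)$. By the Markov structure $U_1\leftrightarrow U_2\leftrightarrow\cdots\leftrightarrow U_K\leftrightarrow S\leftrightarrow X\leftrightarrow Y_k$, we have $h(Y_k^n\mid U_1^m,\ldots,U_K^m)=h(Y_k^n\mid U_K^m)$, so your telescoped chain collapses to a statement about $h(Y_1^n\mid U_K^m)$ versus $h(Y_K^n\mid U_K^m)$ that carries no distortion dependence at all (with $U_K\approx S$ it reduces to the trivial $2\pi e N_1\le 2\pi e(P+N_1)$); the Lemma~\ref{lemma:difference} contributions, which involve $h(Y_j^n\mid U_{j-1}^m)-h(Y_j^n\mid U_j^m)$ with conditioning that \emph{changes with the level} $j$, cannot be ``inserted'' into a chain whose conditioning set is fixed. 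The paper instead applies the EPI at level $j$ conditioned on $U_j^m$ alone, $\exp[\tfrac{2}{n}h(Y_j^n\mid U_j^m)]\ge\exp[\tfrac{2}{n}h(Y_{j+1}^n\mid U_j^m)]+2\pi e\,\Delta N_j$, and organizes everything through the single interface quantity $E_{f,g}(\tau_1,\ldots,\tau_{K-1})=\sum_{k=1}^K\Delta N_k\exp[\tfrac{2}{n}\sum_{j=1}^k I(U_j^m;Y_j^n\mid U_1^m,\ldots,U_{j-1}^m)]$, which mimics the Gaussian BC capacity expression: Lemma~\ref{lemma:difference} lower-bounds it by the distortion expression, and the level-matched EPI plus a telescoping of the \emph{sum of products} (not a single chain) upper-bounds it by $\exp[\tfrac{2}{n}h(Y_1^n)]/(2\pi e)\le P+N_1$. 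Your single-chain telescope toward ``a lower bound on $2^{2h(Y_1^n)/n}$'' does not reproduce this sum-exponential structure.

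Second, the claim that ``two specializations of the $K-1$ free ratios of the $\tau_k$'s produce the factors $2^k$ and $K$'' hides the substantive work for the $K$-factor region. The choice $\tau_k=D_k$ does give $\underline{\mathcal{D}}^*(P,b)$ by a one-line inequality, but $\underline{\mathcal{D}}(P,b)$ requires showing that \emph{some} admissible $\tau_1\ge\cdots\ge\tau_{K-1}\ge 0$ satisfies $\frac{1}{KD_k}\le\frac{(1+\tau_k)\prod_{j=2}^k(D_j+\tau_{j-1})}{\prod_{j=1}^k(D_j+\tau_j)}$ for all $k$; the paper constructs the $\tau_k$'s recursively so that these hold with equality for $k<K$, proves by induction the invariant $\alpha_k=\frac{D_k(1+\tau_k)}{D_{k+1}+\tau_k}\le(K-k)^{-1}$ to guarantee the recursion stays feasible, and treats the case $D_1>1/K$ (where the equality choice does not exist) separately by sending the first $\tau$'s to infinity and reducing to a $(K-r)$-user instance. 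Nothing in your proposal indicates how the existence of such $\tau$'s would be established, so the inclusion $\mathcal{D}(P,b)\subseteq\underline{\mathcal{D}}(P,b)$ remains unproved in your plan.
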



Theorem \ref{theorem:maintheorem} is stated as inner and outer bounds to the achievable distortion region, however it can be observed that the bounds have similar forms, and their difference, in terms of distortions, can be bounded by some multiplicative constants. The following corollary follows directly from Theorem \ref{theorem:maintheorem}, by comparing (\ref{eqn:innerbound}) and (\ref{eqn:firstoutbound}).

\begin{corollary}
\label{corollary:firstcorollary}
If $(D_1,D_2,\ldots,D_K)\in \mathcal{D}(P,b)$, and if $D_k\geq 2D_{k+1}$ for $k=1,2,\ldots,K-1$, then $(2D_1,2^2D_2,\ldots,2^KD_K)\in \hat{\mathcal{D}}(P,b)$.
\end{corollary}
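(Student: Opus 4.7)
The plan is to exploit the fact that the inner bound $\hat{\mathcal{D}}(P,b)$ in (\ref{eqn:innerbound}) and the outer bound $\underline{\mathcal{D}}^*(P,b)$ in (\ref{eqn:firstoutbound}) have \emph{exactly} the same algebraic form, the only difference being that each $D_k$ in the inner-bound power inequality is replaced by $2^k D_k$ in the outer-bound power inequality. Hence the corollary ought to reduce to a one-line substitution, and the hypothesis $D_k\geq 2D_{k+1}$ is needed only to preserve monotonicity of the distortion vector after the coordinate-wise scaling by $2^k$.

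Concretely, I would first invoke the outer half of Theorem~\ref{theorem:maintheorem}: since $(D_1,\ldots,D_K)\in\mathcal{D}(P,b)\subseteq \underline{\mathcal{D}}^*(P,b)$, we obtain
\begin{equation*}
\sum_{k=1}^K \Delta N_k\,(2^k D_k)^{-1/b} \leq P+N_1.
\end{equation*}
Setting $D'_k\triangleq 2^k D_k$, this is precisely the power inequality required for $(D'_1,D'_2,\ldots,D'_K)$ to lie in $\hat{\mathcal{D}}(P,b)$ as defined in (\ref{eqn:innerbound}).

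The remaining task is to verify the ordering $1\geq D'_1\geq D'_2\geq \cdots\geq D'_K\geq 0$ built into $\hat{\mathcal{D}}(P,b)$. Non-negativity is immediate. The chain $D'_k\geq D'_{k+1}$ is $2^k D_k\geq 2^{k+1}D_{k+1}$, equivalently $D_k\geq 2D_{k+1}$, which is exactly the hypothesis. The only remaining constraint is $D'_1=2D_1\leq 1$; this is the mildest point, since distortion $\geq 1$ is achieved trivially on a unit-variance source by outputting zero, so the meaningful regime is $D_1\leq 1/2$, and the edge case $D_1>1/2$ can be handled by capping $2D_1$ at $1$ in the claim. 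I do not anticipate any real obstacle: the entire content of the corollary is carried by Theorem~\ref{theorem:maintheorem}, and the result is a convenient restatement of the gap between the inner and outer bounds as the multiplicative distortion factor $2^k$ at user~$k$.
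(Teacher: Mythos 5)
Your proof is correct and is exactly the paper's argument: the corollary is stated to ``follow directly from Theorem \ref{theorem:maintheorem}, by comparing (\ref{eqn:innerbound}) and (\ref{eqn:firstoutbound})'', i.e.\ substitute $D'_k=2^kD_k$ into the outer-bound inequality and use the hypothesis $D_k\geq 2D_{k+1}$ only to preserve the ordering required in (\ref{eqn:innerbound}). Your remark about the edge case $2D_1>1$ is a fair observation that the paper itself glosses over, and the trivial achievability of distortion $1$ disposes of it as you say.
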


The condition $D_k\geq 2D_{k+1}$ in Corollary \ref{corollary:firstcorollary} is to ensure that the distortion vector $(2D_1,2^2D_2,\ldots,2^KD_K)$ satisfies the monotonicity requirement in Definition \ref{def:distortionvector} and (\ref{eqn:innerbound}). This result has the following intuitive interpretation if the condition indeed holds that $D_k\geq 2 D_{k+1}$ for all $k=1,2,\ldots,K-1$: if a genie helps the separation-based scheme by giving each individual user half a bit information per source sample, and at the same time all the better users also receive this half a bit information for free, then the separation-based scheme is as good as the optimal scheme.

This approximation can in fact be refined, and for this purpose, the following additional definition is needed. For any $1\geq D_1\geq D_2\geq \ldots\geq D_K\geq 0$, we associate with it a \textit{relaxed distortion vector} $(D^*_1,D^*_2,\ldots,D^*_K)$ and a binary labeling vector $(B_1,B_2,\ldots,B_K)$ in a recursive manner
\begin{align}
\label{eqn:enhanceddistortion}
(D^*_k,B_k)&=\left\{\begin{array}{ll}(D^*_{k-1},0)& \mbox{if }2^{1+\sum_{j=1}^{k-1}B_j}\frac{D_k}{D^*_{k-1}}\geq 1 \\(2^{1+\sum_{j=1}^{k-1}B_j}D_k,1)&\mbox{otherwise}\end{array}\right.
\end{align}
for $k=1,2,\ldots,K$, and we have defined $D^*_0=1$ for convenience. It is easily verified that $D^*_k\geq D^*_{k+1}$ for $k=1,2,\ldots,K-1$, and moreover $D^*_k\leq 2^{1+\sum_{j=1}^{k-1}B_j} D_k$.

\begin{proposition}
\label{proposition:firstcorollary}
Let $(D^*_1,D^*_2,\ldots,D^*_K)$ be the relaxed distortion vector of $(D_1,D_2,\ldots,D_K)$. If $(D_1,D_2,\ldots,D_K)\in \mathcal{D}(P,b)$, then $(D^*_1,D^*_2,\ldots,D^*_K)\in \hat{\mathcal{D}}(P,b)$.
\end{proposition}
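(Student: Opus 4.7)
The plan is to apply the outer bound in Theorem \ref{theorem:maintheorem} not to the original $K$-user problem but to an auxiliary sub-problem obtained by retaining only those indices where $B_k=1$. Let $J=\{k_1<k_2<\ldots<k_m\}$ collect these indices. Since the Gaussian broadcast channel is (physically) degraded, any $(m,n,P,d_1,\ldots,d_K)$ broadcast code for the original problem immediately induces an $(m,n,P,d_{k_1},\ldots,d_{k_m})$ code for the $m$-user Gaussian broadcast channel with noise variances $N_{k_1}\geq\ldots\geq N_{k_m}$, just by retaining the decoders at indices in $J$. Hence $(D_{k_1},\ldots,D_{k_m})$ lies in the achievable distortion region of the sub-problem, and Theorem \ref{theorem:maintheorem} applied there yields
\begin{equation}
\sum_{i=1}^m (N_{k_i}-N_{k_{i+1}})(2^i D_{k_i})^{-1/b}\leq P+N_{k_1},
\end{equation}
with the convention $N_{k_{m+1}}\triangleq 0$.

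Next I would check two structural facts about the recursion (\ref{eqn:enhanceddistortion}). First, the triggering condition $B_{k_{i+1}}=1$ reads $2^{c_{k_{i+1}}}D_{k_{i+1}}<D^*_{k_{i+1}-1}$; because no increments occur strictly between $k_i$ and $k_{i+1}$, one has $D^*_{k_{i+1}-1}=D^*_{k_i}=2^{c_{k_i}}D_{k_i}$ and $c_{k_{i+1}}=c_{k_i}+1$, which immediately gives the gap $D_{k_i}>2D_{k_{i+1}}$ (this is what justifies the use of the outer bound in the reduced problem, paralleling the monotonicity hypothesis of Corollary \ref{corollary:firstcorollary}). Second, since $B$ has been incremented exactly $i-1$ times before $k_i$, $c_{k_i}=i$, so $D^*_k=2^iD_{k_i}$ is constant on the block $[k_i,k_{i+1}-1]$ and $D^*_k=1$ on $[1,k_1-1]$. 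With these in hand I would telescope $\sum_{k=1}^K\Delta N_k(D^*_k)^{-1/b}$ block by block using $\sum_{k=a}^b\Delta N_k=N_a-N_{b+1}$, which splits the sum into a residual $N_1-N_{k_1}$ from the inactive prefix plus $\sum_{i=1}^m(N_{k_i}-N_{k_{i+1}})(2^iD_{k_i})^{-1/b}$ from the active blocks. Combining with the sub-problem bound above gives $\sum_{k=1}^K\Delta N_k(D^*_k)^{-1/b}\leq(N_1-N_{k_1})+(P+N_{k_1})=P+N_1$, and the monotonicity $1=D^*_0\geq D^*_1\geq\ldots\geq D^*_K\geq 0$ is built directly into (\ref{eqn:enhanceddistortion}); thus $(D^*_1,\ldots,D^*_K)\in\hat{\mathcal{D}}(P,b)$.

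The part I expect to require the most care is the accounting, not the information-theoretic content. Specifically, one must see that the outer bound on the reduced sub-problem carries exactly the constant $P+N_{k_1}$ (not $P+N_1$), so that the residual $N_1-N_{k_1}$ coming from the inactive prefix of the telescoping exactly absorbs the mismatch. Equivalently, the inactive users contribute their $\Delta N_k$'s only at the multiplier of the most recent active user (or at the trivial multiplier $1$ before any active user), which is what collapses the block-by-block sum and recovers the full-problem constant $P+N_1$. Aside from this, I do not anticipate any difficulty that is not routine, given that the heavy lifting has already been done by Theorem \ref{theorem:maintheorem}.
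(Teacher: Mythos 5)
Your argument is correct, and it reaches the paper's final inequality by a somewhat different route. The paper proves the proposition by returning to the parametrized outer bound of Theorem \ref{theorem:outerbound} and choosing the auxiliary parameters recursively, $\tau_k=D_k$ when $B_k=1$ and $\tau_k=\tau_{k-1}$ when $B_k=0$ (with $\tau_0=1$), so that the bracket terms of the inactive users collapse onto those of the most recent active user; the active brackets are then lower-bounded by $(2^iD_{k_i})^{-1/b}$ exactly as in the proof of $\mathcal{D}(P,b)\subseteq\underline{\mathcal{D}}^*(P,b)$, giving $\sum_{k}\Delta N_k(D^*_k)^{-1/b}\leq P+N_1$ and hence membership in $\hat{\mathcal{D}}(P,b)$ via Theorem \ref{theorem:innerbound}. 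You instead restrict the given code to the active users $k_1<\ldots<k_{|\mathcal{B}|}$ (dropping decoders is all that is needed; degradedness plays no role here), invoke the already-packaged bound $\underline{\mathcal{D}}^*$ of Theorem \ref{theorem:maintheorem} for that reduced problem with constant $P+N_{k_1}$, and then recover the full-problem constant $P+N_1$ through the block-telescoping of $\sum_k\Delta N_k(D^*_k)^{-1/b}$, using $D^*_{k_i}=2^iD_{k_i}$ and the constancy of $D^*$ on inactive blocks. At the information-theoretic level this is the paper's parameter choice in disguise, but your modular packaging (sub-problem reduction plus accounting) avoids re-entering the algebra of Theorem \ref{theorem:outerbound}, whereas the paper's route shows directly how the flexibility of the parametrized bound is exploited. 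One small remark: the gap $D_{k_i}>2D_{k_{i+1}}$ you verify is true but not actually needed anywhere in your proof --- the sub-problem outer bound only requires $1\geq D_{k_1}\geq\ldots\geq D_{k_{|\mathcal{B}|}}\geq 0$, which is inherited from the original vector, and the monotonicity of $(D^*_1,\ldots,D^*_K)$ needed for $\hat{\mathcal{D}}(P,b)$ is built into (\ref{eqn:enhanceddistortion}); it would only be relevant if you wanted to route the reduced problem through Corollary \ref{corollary:firstcorollary} instead of through the outer bound.
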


The notion of relaxed distortion vector essentially removes the rather artificial condition $D_k\geq 2D_{k+1}$ in Corollary \ref{corollary:firstcorollary}. When this condition does not hold for some $k$, the relaxed distortion vector is introduced to replace $(2D_1,2^2D_2,\ldots,2^KD_K)$, which in this case does not satisfy the monotonicity requirement in Definition \ref{def:distortionvector} and thus is not a valid choice of a distortion vector; nevertheless, in this case, the difference between the original distortion vector and its relaxed version is in fact smaller, being $2^{1+\sum_{j=1}^{k-1}B_j}$, instead of $2^k$ for $D_k$ as in the case already considered in Corollary \ref{corollary:firstcorollary}. 

Proposition \ref{proposition:firstcorollary} can be used in the situation where there are an infinite number of users such as in a fading channel. Let the set of users indexed by $x$ and their associated distortions be denoted as $D_x$, since there may be an uncountably infinite many of them. If we apply the construction given in (\ref{eqn:enhanceddistortion}), with $B_i$ replaced by $B_x$, $\sup\{D_x\}$ taking the role of $D_1$ and $\inf\{D_x\}$ taking the role of $D_K$, then the following lemma is straightforward by observing that $\inf\{D^*_x\}\leq \sup\{D^*_x\}\leq 2\sup\{D_x\}$ and $\inf\{D^*_x\}=2^{\sum_x{B_x}} \inf\{D_x\}$.
\begin{lemma}
\label{lemma:maximumgroups}
The sequence $B_x$ specified by (\ref{eqn:enhanceddistortion}) satisfies $\sum_{x}B_x\leq \log_2(\sup\{D_x\})-\log_2(\inf\{D_x\})+1$.
\end{lemma}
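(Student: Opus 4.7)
The plan is to derive the inequality directly from the two observations highlighted in the text preceding the lemma, namely
\begin{align}
\sup\{D^*_x\} &\leq 2\sup\{D_x\}, \\
\inf\{D^*_x\} &\geq 2^{\sum_x B_x}\inf\{D_x\}.
\end{align}
Chained together with the trivial bound $\inf\{D^*_x\}\leq \sup\{D^*_x\}$, these imply $2^{\sum_x B_x}\inf\{D_x\}\leq 2\sup\{D_x\}$, and taking $\log_2$ yields precisely the bound claimed by the lemma.

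To establish the upper bound on $\sup\{D^*_x\}$, I would first check from the recursion (\ref{eqn:enhanceddistortion}) that $D^*_x$ is non-increasing in $x$: when $B_x=0$ the value is unchanged by definition, while when $B_x=1$ the ``otherwise'' clause forces the new value $2^{1+\sum_{j<x}B_j}D_x$ to be strictly below $D^*_{x-1}$. Hence $\sup\{D^*_x\}$ is attained at the smallest index $x$, where $D^*$ is either still $D^*_0=1$ (if $2D_1\geq 1$) or $2D_1$ (otherwise); in both cases it is at most $2D_1\leq 2\sup\{D_x\}$, using that $D_x\leq 1$ by the monotonicity condition in Definition \ref{def:distortionvector}.

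For the lower bound on $\inf\{D^*_x\}$, I would locate the last index $x^*$ at which $B_{x^*}=1$; the case where no such index exists is trivial since then $D^*_x\equiv 1\geq \inf\{D_x\}$. By the recursion, at that step $D^*_{x^*}=2^{1+\sum_{j<x^*}B_j}D_{x^*}=2^{\sum_x B_x}D_{x^*}$, and at every later index the value is preserved, so $\inf\{D^*_x\}=2^{\sum_x B_x}D_{x^*}\geq 2^{\sum_x B_x}\inf\{D_x\}$, where the final inequality uses that $D_x$ is taken to be non-increasing in $x$. The main obstacle I foresee is extending this argument cleanly to the uncountably infinite user case alluded to just above the lemma: the notion of a ``last'' update index has to be reinterpreted as a supremum over a countable sequence of increasing partial sums of $B_x$. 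However, since $B_x\in\{0,1\}$ and $D^*_x$ is both monotone and bounded, this limiting procedure is well-defined and preserves both inequalities, so the argument goes through.
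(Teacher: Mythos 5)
Your proposal is correct and follows essentially the same route as the paper, which justifies the lemma precisely by the chain $2^{\sum_x B_x}\inf\{D_x\}\leq \inf\{D^*_x\}\leq \sup\{D^*_x\}\leq 2\sup\{D_x\}$ and then takes $\log_2$. Your added details (monotonicity of $D^*_x$ from the recursion, identifying the last update index, and using $\geq$ rather than the paper's stated equality for $\inf\{D^*_x\}$) simply flesh out what the paper declares ``straightforward.''
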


It is clear that the maximum multiplicative constant is less than $2^{1+\sum_{x}B_x}$ in the statement of Proposition \ref{proposition:firstcorollary}. If there exists a lower bound on the achievable distortion for the best user, denoted as $d_{\mbox{\small{min}}}$, which is strictly positive, {\em i.e.}, $\inf\{D_x\}\geq d_{\mbox{\small{min}}}>0$, then since $\sup\{D_x\}\leq 1$, the multiplicative factor can be bounded as 
\begin{align*}
2^{1+\sum_{x}B_x}\leq \frac{4}{d_{\mbox{\small{min}}}}.
\end{align*} 
Thus even when the number of users is infinite, as long as the lower bound $d_{\mbox{\small{min}}}$ is bounded away from zero, the multiplicative factors are in fact finite. More formally, we have the following corollary\footnote{Here we directly take the number of users to infinity in Proposition \ref{proposition:firstcorollary}, however a more rigorous approach is to derive the outer bounds for this case and show the result holds. This can indeed be done either along the line of the proof given in Section \ref{sec:proof} with careful replacement of summation by integral, or more straightforwardly along the line of proof given in Section \ref{sec:general}.}.

\begin{corollary}
\label{corollary:infiniteusers}
For an infinite number of users indexed by $x$ with $\inf_{\{D_x\}\in\mathcal{D}(P,b)}\inf\{D_x\}\geq d_{\mbox{\small{min}}}$, let $\{D^*_x\}$ be the relaxed distortion vector of $\{D_x\}$. If $\{D_x\}\in \mathcal{D}(P,b)$, then $\{D^*_x\}\in \hat{\mathcal{D}}(P,b)$, and furthermore, $\sup\{\frac{D_x}{D^*_x}\}\leq \frac{4}{d_{\mbox{\tiny{min}}}}$. 
\end{corollary}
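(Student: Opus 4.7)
The plan is to prove this corollary in two steps: first, extending the finite-user Proposition \ref{proposition:firstcorollary} to the (possibly uncountably) infinite collection of users in order to obtain the inclusion $\{D^*_x\} \in \hat{\mathcal{D}}(P,b)$; second, converting the exponent bound on $\sum_x B_x$ supplied by Lemma \ref{lemma:maximumgroups} into the stated multiplicative factor on $\sup\{D_x/D^*_x\}$.

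For the first step, I would realize the full index set as the monotone limit of nested finite subsets $S_1\subset S_2\subset\cdots$. On each finite $S_n$ I can invoke Proposition \ref{proposition:firstcorollary} directly, so the corresponding restricted relaxed distortion vector lies in the finite-user separation region defined by (\ref{eqn:innerbound}). Letting $n\to\infty$, the sum-type constraint in (\ref{eqn:innerbound}) passes to the limit via monotone convergence applied to the nonnegative increments $\Delta N_k (D^*_k)^{-1/b}$, yielding $\{D^*_x\}\in\hat{\mathcal{D}}(P,b)$. As the footnote to the corollary indicates, alternative routes are to mimic the proof of Section \ref{sec:proof} with sums replaced by integrals or to appeal directly to the cleaner general-channel argument of Section \ref{sec:general}, both of which avoid the approximation altogether.

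For the second step, I would apply Lemma \ref{lemma:maximumgroups} together with $\sup\{D_x\}\leq 1$ and $\inf\{D_x\}\geq d_{\mbox{\small{min}}}$ to obtain $\sum_x B_x\leq -\log_2 d_{\mbox{\small{min}}}+1$, hence $2^{1+\sum_x B_x}\leq 4/d_{\mbox{\small{min}}}$. Next, inspecting the recursion (\ref{eqn:enhanceddistortion}), an easy induction on $k$ shows $D^*_k\geq D_k$: in the ``otherwise'' branch, $D^*_k=2^{1+\sum_{j<k}B_j}D_k\geq 2D_k>D_k$, while in the other branch $D^*_k=D^*_{k-1}\geq D_{k-1}\geq D_k$ by the inductive hypothesis and the assumed monotonicity of $\{D_k\}$. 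Consequently $D_x/D^*_x\leq 1$ for every $x$. Since $d_{\mbox{\small{min}}}\leq \sup\{D_x\}\leq 1$ forces $4/d_{\mbox{\small{min}}}\geq 4\geq 1$, the stated bound $\sup\{D_x/D^*_x\}\leq 4/d_{\mbox{\small{min}}}$ follows at once.

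The main obstacle is clearly the first step: making the infinite-user limiting argument rigorous, and in particular verifying that the separation-based scheme continues to satisfy the power constraint on the infinite user family and that the induced distortions respect the limit of the finite-subset bounds. The multiplicative bound in the second step is essentially bookkeeping given Lemma \ref{lemma:maximumgroups}; it is worth noting that the same quantity $2^{1+\sum_x B_x}\leq 4/d_{\mbox{\small{min}}}$ also controls the reciprocal ratio $D^*_x/D_x$ (the natural measure of approximation tightness), so both directions of the ratio enjoy the same $O(1/d_{\mbox{\small{min}}})$ control.
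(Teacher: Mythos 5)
Your proof is correct and follows essentially the same route as the paper, which obtains this corollary directly from Proposition~\ref{proposition:firstcorollary} and Lemma~\ref{lemma:maximumgroups} by bounding $2^{1+\sum_x B_x}\leq 4/d_{\min}$ via $\sup\{D_x\}\leq 1$ and $\inf\{D_x\}\geq d_{\min}$, with the same footnoted caveat about passing from finitely to infinitely many users. Your side observation that the literal ratio $D_x/D^*_x$ is trivially at most $1$ (since $D^*_x\geq D_x$) while the $4/d_{\min}$ factor really controls the meaningful blow-up $D^*_x/D_x$ correctly resolves what appears to be an inverted ratio in the corollary's statement.
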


The next corollary gives another version of the approximation, essentially stating that for any achievable distortion vector, its $K$-fold multiple is achievable using the separation approach. In terms of the genie-aided interpretation, the genie only needs to provide $\frac{1}{2}\log K$ bits common information to the users in the separation-based scheme, then it is as good as the optimal scheme. More formally, the following corollary follows directly from Theorem \ref{theorem:maintheorem}.

\begin{corollary}\label{corollary:secondcorollary}
If $(D_1,D_2,\ldots,D_K)\in \mathcal{D}(P,b)$, then $(KD_1,KD_2,\ldots,KD_K)^{+}_K\in \hat{\mathcal{D}}(P,b)$, where $(x_1,x_2,...,x_K)^{+}_K=(\min(1,x_1),\min(1,x_2),...,\min(1,x_K))$.
\end{corollary}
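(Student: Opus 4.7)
The plan is to apply the outer bound $\mathcal{D}(P,b)\subseteq\underline{\mathcal{D}}(P,b)$ from Theorem \ref{theorem:maintheorem}, but with a small twist to accommodate the capping operation $(\cdot)_K^+$. Because $(D_1,\ldots,D_K)$ is monotone non-increasing, the indices at which capping takes effect (i.e., where $KD_k\geq 1$) form a prefix $\{1,\ldots,k^*\}$ for some $k^*\in\{0,1,\ldots,K\}$. A direct substitution of the capped values into the bound of Theorem \ref{theorem:maintheorem} fails at exactly these indices, because replacing $KD_k$ by $1$ \emph{increases} the factor $(KD_k)^{-1/b}$ whenever $KD_k>1$, so the inequality moves the wrong way.

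To get around this, I would apply Theorem \ref{theorem:maintheorem} not to the full problem but to the \emph{reduced broadcast problem} consisting of only the last $K-k^*$ users, whose noise variances are $N_{k^*+1}\geq N_{k^*+2}\geq\ldots\geq N_K$. The reduction is legal since any $(m,n,P,d_1,\ldots,d_K)$ code for the $K$-user problem is automatically an $(m,n,P,d_{k^*+1},\ldots,d_K)$ code for the reduced problem, obtained by discarding the first $k^*$ decoders. The outer bound applied to the smaller problem then reads
\begin{align*}
\sum_{k=k^*+1}^K \Delta N_k\bigl((K-k^*)D_k\bigr)^{-1/b}\leq P+N_{k^*+1},
\end{align*}
and since $K-k^*\leq K$ and $x\mapsto x^{-1/b}$ is decreasing on $(0,\infty)$, the same inequality holds with $(KD_k)^{-1/b}$ in place of $\bigl((K-k^*)D_k\bigr)^{-1/b}$.

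The final step is to reassemble the sum defining $\hat{\mathcal{D}}(P,b)$ on the capped vector. The first $k^*$ terms contribute $\sum_{k=1}^{k^*}\Delta N_k = N_1-N_{k^*+1}$ by telescoping, since $D_k^+=1$ there; the remaining $K-k^*$ terms contribute at most $P+N_{k^*+1}$ by the weakened reduced outer bound. Adding these gives $P+N_1$, which is exactly the inner-region condition, and the monotonicity $1\geq D_1^+\geq\ldots\geq D_K^+\geq 0$ is inherited from that of $(D_k)$. The boundary cases $k^*=0$ (no capping, apply Theorem \ref{theorem:maintheorem} directly) and $k^*=K$ (all entries capped to $1$, so the sum collapses to $N_1\leq P+N_1$) fall out immediately.

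The only nonroutine point, and the main obstacle, is the insight that the constant $K$ in the outer bound must be allowed to shrink to $K-k^*$ upon dropping the trivial users, and that this shrinkage is precisely what compensates for the enlargement introduced by capping. Without invoking the outer bound on the reduced problem, the naive bookkeeping on the full $K$-user bound leaves a positive residual that cannot be absorbed.
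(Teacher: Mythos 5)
Your proof is correct, but it is more careful than — and somewhat different in route from — what the paper actually does: the paper offers no argument at all for this corollary, asserting that it "follows directly from Theorem \ref{theorem:maintheorem}" by comparing the defining inequalities of $\hat{\mathcal{D}}(P,b)$ and $\underline{\mathcal{D}}(P,b)$, and in doing so it glosses over exactly the capping issue you flag (when $KD_k>1$, replacing $KD_k$ by $1$ increases $(KD_k)^{-1/b}$, so the substitution is not literally direct). The ingredient that repairs this is already present inside the paper, just not where the corollary is stated: in the proof of the inclusion $\mathcal{D}(P,b)\subseteq\underline{\mathcal{D}}(P,b)$, the case $D_1>1/K$ is handled by sending $\tau_1,\ldots,\tau_r\rightarrow\infty$ for the users with $D_k>1/K$ in the bound of Theorem \ref{theorem:outerbound}, which makes their bracket terms tend to $1$ and leaves the remaining users with the $(K-r)$-user constant; taking that limit in (\ref{eqn:outerbound}) yields precisely your inequality $N_1-N_{k^*+1}+\sum_{k>k^*}\Delta N_k\bigl((K-k^*)D_k\bigr)^{-1/b}\leq P+N_1$. (The explicit proof of the general-channel counterpart, Corollary \ref{corollary:secondcorollaryprime}, also absorbs the capping automatically, since the separation distortions constructed there are at most $1$ and at most $KD_k$.) Your route achieves the same effect operationally rather than analytically: dropping the decoders of the first $k^*$ users and invoking the \emph{statement} of Theorem \ref{theorem:maintheorem} for the reduced $(K-k^*)$-user channel is exactly the coding-theoretic counterpart of sending those users' $\tau$ parameters to infinity, and the monotonicity of $x^{-1/b}$ lets you relax $K-k^*$ back to $K$. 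What your argument buys is a self-contained deduction from the theorem as stated (no need to reopen its proof), together with an explicit acknowledgment and resolution of the capping subtlety; what the paper's implicit argument buys is economy, since the limiting-$\tau$ device is already built into the proof of Theorem \ref{theorem:maintheorem}. All of your individual steps (the capped indices forming a prefix by monotonicity, the legality of the reduction by discarding decoders, the telescoping $\sum_{k\leq k^*}\Delta N_k=N_1-N_{k^*+1}$, and the boundary cases $k^*=0$ and $k^*=K$) check out.
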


Theorem \ref{theorem:maintheorem}, Proposition \ref{proposition:firstcorollary} and the corollaries provide approximate characterizations of the achievable distortion region, essentially stating that the loss of the source-channel separation approach is bounded by constants. The bound on the gap is chosen to be (largely) independent of a specific distortion  tuple on the boundary of $\mathcal{D}(P,b)$, but it will become clear in the next section that such a choice is not necessary.

The proofs of Theorem \ref{theorem:maintheorem} and Proposition \ref{proposition:firstcorollary} rely heavily on the following outer bound, which is one of the main contributions of this work.

\begin{theorem}\label{theorem:outerbound}
Let $\tau_1\geq \tau_2\geq \ldots\geq \tau_{K-1}$ be any $K-1$ non-negative real values, and $\tau_K=0$. If $(D_1,D_2,\ldots,D_K)\in \mathcal{D}(P,b)$, then 
\begin{align}
\sum_{k=1}^K \Delta N_k \left[\frac{(1+\tau_k)\prod_{j=2}^k(D_j+\tau_{j-1})}{\prod_{j=1}^k (D_j+\tau_j)}\right]^{\frac{1}{b}}\leq P+N_1.\label{eqn:outerbound}
\end{align}
\end{theorem}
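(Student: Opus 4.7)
The plan is to isolate the conditional entropy power $\alpha_k \triangleq \frac{1}{2\pi e}2^{2h(Y_k^n\mid U_k^m)/n}$ as an ``interface'' between the source side and the channel side, where $U_k^m \triangleq S^m + V_k^m$ with $V_k^m$ drawn i.i.d.\ Gaussian of variance $\tau_k$, independent of the source and of all channel noises (so $U_K^m = S^m$ since $\tau_K=0$). I would then derive the theorem by sandwiching $\alpha_1$ from above using the channel power constraint and from below via a recursion driven by the physical degradedness, and let the two bounds match up after a common factor cancels.

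For the upper bound, the power constraint yields $h(Y_1^n)\leq \frac{n}{2}\log 2\pi e(P+N_1)$, and the first part of Lemma~\ref{lemma:difference} with $W = Y_1^n$ (which reconstructs $S^m$ at distortion $D_1$) and $\tau'=\tau_1$ gives $I(Y_1^n;U_1^m)\geq \frac{m}{2}\log\frac{1+\tau_1}{D_1+\tau_1}$. Subtracting and exponentiating,
\begin{align}
\alpha_1 \leq (P+N_1)\left(\frac{D_1+\tau_1}{1+\tau_1}\right)^{1/b}. \notag
\end{align}

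For the matching lower bound, I would recurse on $k$. The physical degradedness $Y_k^n = Y_{k+1}^n+\Delta Z_k^n$, with $\Delta Z_k^n$ Gaussian of variance $\Delta N_k$ and independent of $(Y_{k+1}^n,U_k^m)$, lets the conditional entropy power inequality give $\alpha_k \geq \frac{1}{2\pi e}2^{2h(Y_{k+1}^n\mid U_k^m)/n}+\Delta N_k$. The second part of Lemma~\ref{lemma:difference}, applied with $W=Y_{k+1}^n$, $\tau=\tau_{k+1}$, $\tau'=\tau_k$, yields $h(Y_{k+1}^n\mid U_k^m)-h(Y_{k+1}^n\mid U_{k+1}^m)\geq \frac{m}{2}\log q_k$ where $q_k \triangleq \frac{(1+\tau_{k+1})(D_{k+1}+\tau_k)}{(1+\tau_k)(D_{k+1}+\tau_{k+1})}$, so that
\begin{align}
\alpha_k \geq q_k^{1/b}\alpha_{k+1}+\Delta N_k. \notag
\end{align}
The base case is immediate: $U_K^m=S^m$, the encoder is deterministic in $S^m$, so $h(Y_K^n\mid S^m)=h(Z_K^n)=\frac{n}{2}\log 2\pi e N_K$, giving $\alpha_K=N_K=\Delta N_K$. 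Telescoping down from $k=K$ produces $\alpha_1 \geq \sum_{k=1}^K \Delta N_k\prod_{j=1}^{k-1}q_j^{1/b}$.

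The final step is an algebraic check: a short telescoping computation shows $\prod_{j=1}^{k-1}q_j = \frac{D_1+\tau_1}{1+\tau_1}\cdot\frac{(1+\tau_k)\prod_{j=2}^k(D_j+\tau_{j-1})}{\prod_{j=1}^k(D_j+\tau_j)}$, so the factor $\bigl(\frac{D_1+\tau_1}{1+\tau_1}\bigr)^{1/b}$ appears on both sides of the $\alpha_1$ sandwich and cancels, delivering (\ref{eqn:outerbound}). The hardest part, I expect, is not any single inequality but the architectural choice itself: selecting $h(Y_k^n\mid U_k^m)$ as the right interface so that (a) conditional EPI threads through via physical degradedness, (b) Lemma~\ref{lemma:difference} produces exactly the right MI gaps between adjacent conditionings, and (c) the source-side correction $\bigl(\frac{D_1+\tau_1}{1+\tau_1}\bigr)^{1/b}$ cancels rather than accumulates; verifying the independence hypotheses required by the conditional EPI and tracking the products carefully is then largely bookkeeping.
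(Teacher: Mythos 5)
Your proposal is correct and follows essentially the same route as the paper: the same auxiliary variables $U_k=S+V_k$, the same two bounds from Lemma~\ref{lemma:difference} (with $W=Y_1^n$ for the first and $W=Y_{k+1}^n$ for the difference bound), the same entropy power inequality step through the degraded noise components $\Delta Z_k$, and the same use of the power constraint on $h(Y_1^n)$. The only difference is organizational: you sandwich the conditional entropy power $\alpha_1$ via a recursion, whereas the paper bounds the sum-exponential quantity $E_{f,g}(\tau_1,\ldots,\tau_{K-1})$ from both sides, and these amount to the same telescoping once the common factor $\left(\frac{D_1+\tau_1}{1+\tau_1}\right)^{1/b}$ cancels.
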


With the above theorem in mind, let us denote the set of distortion vectors satisfying (\ref{eqn:outerbound}) for a specific choice of $\tau_1\geq \tau_2\geq \ldots\geq \tau_{K-1}$ as $\underline{\mathcal{D}}(P,b,\tau_1,\ldots,\tau_{K-1})$, \textit{i.e.}, (\ref{eqn:defineoutbounds}) as given on the top of next page.
\begin{figure*}[bt]
\normalsize
\setcounter{MYtempeqncnt}{\value{equation}}
\addtocounter{MYtempeqncnt}{1}
\begin{align}
&\underline{\mathcal{D}}(P,b,\tau_1,\ldots,\tau_{K-1})\triangleq\left\{(D_1,D_2,\ldots,D_K):\sum_{k=1}^K \Delta N_k \left[\frac{(1+\tau_k)\prod_{j=2}^k(D_j+\tau_{j-1})}{\prod_{j=1}^k (D_j+\tau_j)}\right]^{\frac{1}{b}}\leq P+N_1,\right.\nonumber\\
&\qquad\qquad\qquad\qquad\qquad\qquad\left.\phantom{\sum_{k=1}^K \Delta N_k \left[\frac{(1+\tau_k)\prod_{j=2}^k(D_j+\tau_{j-1})}{\prod_{j=1}^k (D_j+\tau_j)}\right]^{\frac{1}{b}}} 1\geq D_1\geq D_2 \geq\ldots\geq D_K\geq 0\right\}.\label{eqn:defineoutbounds}
\end{align}
\hrulefill
\end{figure*}
\setcounter{equation}{\value{MYtempeqncnt}}
Thus Theorem \ref{theorem:outerbound} essentially states that $\mathcal{D}(P,b)\subseteq \underline{\mathcal{D}}(P,b,\tau_1,\ldots,\tau_{K-1})$ for any valid choice of $\tau_1,\tau_2,\ldots,\tau_{K-1}$. The following corollary is then immediate.
\begin{corollary}
\label{corollary:outerbound}
\begin{align}
\mathcal{D}(P,b)\subseteq \bigcap_{\tau_{1}\geq \tau_{2}\geq \ldots\geq \tau_{K-1}\geq 0}\underline{\mathcal{D}}(P,b,\tau_1,\ldots,\tau_{K-1}).
\end{align}
\end{corollary}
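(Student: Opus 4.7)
The plan is to obtain Corollary~\ref{corollary:outerbound} as a straightforward set-theoretic consequence of Theorem~\ref{theorem:outerbound}, which I take as given here. Conceptually, the corollary is just a quantifier swap: Theorem~\ref{theorem:outerbound} reads ``for every admissible parameter tuple, every achievable point satisfies (\ref{eqn:outerbound})'', while the corollary recasts this as ``every achievable point lies in the intersection of the sets $\underline{\mathcal{D}}(P,b,\tau_1,\ldots,\tau_{K-1})$ indexed by those tuples.'' No new information-theoretic content is needed.

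Concretely, I would proceed in two short steps. First, I would unpack the definition (\ref{eqn:defineoutbounds}): membership in $\underline{\mathcal{D}}(P,b,\tau_1,\ldots,\tau_{K-1})$ is literally the conjunction of the monotonicity constraint $1\geq D_1\geq D_2\geq\cdots\geq D_K\geq 0$ and the parametric inequality (\ref{eqn:outerbound}) for that specific tuple. Second, I would fix an arbitrary $(D_1,\ldots,D_K)\in\mathcal{D}(P,b)$: Definition~\ref{def:distortionvector} supplies the monotonicity, while Theorem~\ref{theorem:outerbound}, applied with an arbitrary but fixed admissible tuple $(\tau_1,\ldots,\tau_{K-1})$ with $\tau_{K}=0$, supplies the inequality (\ref{eqn:outerbound}). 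Hence $(D_1,\ldots,D_K)$ lies in $\underline{\mathcal{D}}(P,b,\tau_1,\ldots,\tau_{K-1})$ for every such tuple, and so in their intersection, which is exactly the claimed containment.

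The main obstacle, if any, is purely the cosmetic one of making the quantifier manipulation airtight rather than any analytic difficulty: the real work of introducing $K-1$ auxiliary noise levels and bounding an ``interface'' quantity via Lemma~\ref{lemma:difference} is what Theorem~\ref{theorem:outerbound} packages, and Section~\ref{sec:proof} handles that separately. One minor point I would flag in the write-up is that the intersection runs over an uncountable parameter family $\{(\tau_1,\ldots,\tau_{K-1}):\tau_1\geq\cdots\geq\tau_{K-1}\geq 0\}$; because each constituent set is defined by a single closed inequality together with the fixed monotonicity constraints, the intersection is well-defined with no measurability or topological subtlety, and the argument is valid as stated.
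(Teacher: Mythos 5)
Your proposal is correct and matches the paper exactly: the paper states that the corollary is ``immediate'' from Theorem~\ref{theorem:outerbound}, since that theorem already asserts $\mathcal{D}(P,b)\subseteq \underline{\mathcal{D}}(P,b,\tau_1,\ldots,\tau_{K-1})$ for every admissible tuple, and intersecting over all tuples is exactly the quantifier swap you describe. Your additional remarks on the monotonicity constraint from Definition~\ref{def:distortionvector} and the well-definedness of the uncountable intersection are harmless elaborations of the same one-line argument.
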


To illustrate Corollary \ref{corollary:outerbound}, let us consider the case $K=2$ for which the bound involves only one parameter $\tau_1$. For this case, it can be shown through some algebra that this outer bound is equivalent to the one given in \cite{ReznicFederZamir:06}. In Fig. \ref{fig:outerbound}, we illustrate the outer bounds for several specific choices of $\tau_1=\tau$. For comparison, the achievable region using the proposed scheme in \cite{ReznicFederZamir:06} is also given. Note that although the inner bound given by this scheme is extremely close to the outer bound, it appears they do not match exactly. 

\begin{figure*}[tb]
\begin{centering}
\includegraphics[width=17cm]{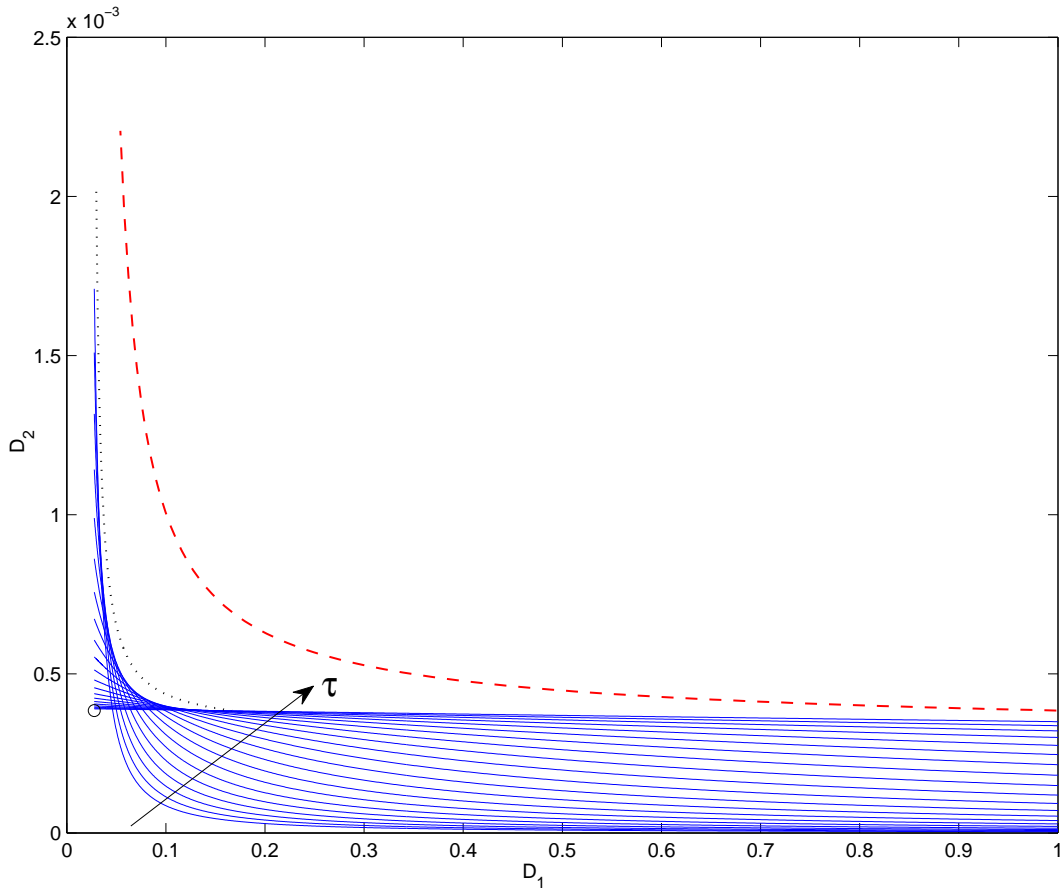}
\caption{Illustration of the outer bounds as $\tau_1=\tau$ varies, for a channel where $N_1= 10$, 
$N_2=1$, $P = 50$, and $b = 2$. The solid blue lines are the outer bounds, and the dashed read line is the inner bound based on source-channel separation; the black circle gives the trivial outer bound with both users at the respective optimum in the  point-to-point setting.\label{fig:outerbound} For comparison, the achievable region using the proposed scheme in \cite{ReznicFederZamir:06} is also given as the dotted black line. }
\end{centering}
\end{figure*}

It is worth emphasizing that we view this outer bound differently from the authors in \cite{ReznicFederZamir:06}: for each possible value of $\tau_1$ we view the condition (\ref{eqn:outerbound}) as specifying an outer bound for the distortion region $(D_1,D_2)$; in contrast, the authors of \cite{ReznicFederZamir:06} viewed the distortion $D_2$ as being lower bounded by a function of $D_1$, and the parameter $\tau_1$ was viewed as an additional variable that is subject to optimization, and consequently only the optimal choice of $\tau_1$ value was of interest. These two views are complementary, however the former view appears to be more natural for the $K$-user problem, which also readily leads to the approximate characterizations. In certain cases, the second view may be more convenient, such as when we are given a specific achievable distortion tuple, and wish to determine how much further improvement is possible or impossible. 

For $K=2$, the properties of the outer bound were thoroughly investigated in \cite{ReznicFederZamir:06}. In certain regimes, this outer bound in fact degenerates for the case of bandwidth compression, and it is looser than the trivial outer bound with each user being optimal in the point-to-point setting\footnote{We would like to thank Dr. Zvi Reznic for clarifying this point in a private communication.}. Due to its non-linear form, the optimization of this bound is rather difficult, and it also appears difficult to determine whether it is always looser than the trivial outer bound in all distortion regimes with bandwidth compression. Nevertheless, it is clear that this outer bound always holds whether the bandwidth is expanded or compressed, and the approximate characterizations are valid in either case. 

A different and simpler approximate characterization may in fact be more useful for the bandwidth compression case\footnote{We would again like to thank David Tse as well as one anonymous reviewer for pointing out this different approximate characterization.}. Consider a different genie who helps the separation-based scheme by giving each individual user half a bit information \textit{per channel use}, and at the same time all the better users also receive this half a bit information for free, then the genie-aided separation-based scheme is as good as the optimal scheme, and moreover each user can in fact achieve the optimal point-to-point distortion.  To see this approximation holds, first observe that the following broadcast channel rates are achievable by using the Gaussian broadcast channel capacity region characterization \cite{Bergmans:74} (it is particularly easy by using the alternative Gaussian broadcast channel capacity characterization given in (\ref{eqn:capacity}))
\begin{align}
&R_k=\max\left(\frac{1}{2}\log_2(1+\frac{P}{N_k})-\frac{1}{2}\log_2(1+\frac{P}{N_{k-1}})-\frac{1}{2},0\right),\nonumber\\
&\qquad\qquad\qquad\qquad\qquad\qquad k=1,2,\ldots,K.\label{eqn:channelapproximation}
\end{align}
The $k$-th user can thus utilize a total rate of $\sum_{i=1}^kR_i$ per channel use on this broadcast channel; together with the genie-provided rates, it will have at least a total rate of $\frac{1}{2}\log_2(1+\frac{P}{N_k})$ per channel use, {\em i.e.}, the optimal point to point channel rate. Since the Gaussian source is successively refinable \cite{EquitzCover:91}, it is now clear that each user can achieve the optimal point-to-point distortion with this genie-aided separation-based scheme. Note that though this approximation is good for bandwidth compression, it can be rather loose when the bandwidth expansion factor is large. In contrast, the approximations given in Theorem \ref{theorem:maintheorem} and Proposition \ref{proposition:firstcorollary} are independent of the bandwidth mismatch factor (the genie provides information in terms of \textit{per source sample}); another difference is that the approximations given in Theorem \ref{theorem:maintheorem} and Proposition \ref{proposition:firstcorollary} rely on the new outer bound, instead of the simple point-to-point distortion outer bound.

It is clear from the above discussion that the outer bound in Theorem \ref{theorem:maintheorem} may be further improved by taking its intersection with the trivial point-to-point outer bound. In the remainder of this paper, we do not pursue such possible improvements, but instead focus on the proofs for the results stated in Theorem \ref{theorem:maintheorem} and Proposition \ref{proposition:firstcorollary}.

\section{Proof of the Main Results for Gaussian Broadcast Channels}
\label{sec:proof}

The proofs of the main results for Gaussian source broadcast on Gaussian broadcast channels are given in this section. We start by establishing a simple inner bound for the distortion region $\mathcal{D}(P,b)$ based on source-channel separation, and then focus on deriving an outer bound, or more precisely a set of outer bounds. The approximate characterizations are then rather straightforward by combining these two bounds. From here on, we shall use natural logarithm for concreteness, though choosing logarithm of a different base does not make any essential difference.

\subsection{A Simple Inner Bound}

The source-channel separation based coding scheme we consider is extremely simple, which is the combination of a Gaussian successive refinement source code and a Gaussian broadcast channel code; this scheme was thoroughly investigated in \cite{TianShamai:07}, and a solution for the optimal power allocation was given to minimize the expected end-user distortion. Since Gaussian broadcast channel is degraded, a better user can always decode completely the messages sent to the worse users, and thus a successive refinement source code is a perfect match for this channel. Note that such a source-channel separation approach is not optimal in general for this joint source-channel coding problem; see for example \cite{Gastpar:03}.

The Gaussian broadcast channel capacity region is well known \cite{Bergmans:74}, which is usually given in a parametric form in terms of the power allocation. In this work, we will use an alternative representation, which first appeared in \cite{Tse:97} and was instrumental for deriving the optimal power allocation solution in \cite{TianShamai:07}. The Gaussian broadcast channel capacity region (per channel use) can be written in the form in (\ref{eqn:capacity}) as given on the top of next page.
\begin{figure*}[bt]
\normalsize
\setcounter{MYtempeqncnt}{\value{equation}}
\addtocounter{MYtempeqncnt}{1}
\begin{align}
&\mathcal{C}=\left\{(R_1,R_2,\ldots,R_M):R_k\geq 0, \,k=1,2,\ldots,K,\quad
\sum_{k=1}^K \Delta N_k\exp\left(2\sum_{j=1}^k R_j\right)\leq P+N_1\right\}.\label{eqn:capacity}
\end{align}
\hrulefill
\end{figure*}
\setcounter{equation}{\value{MYtempeqncnt}}

The rate $R_k$ is the individual message rate intended only to the $k$-th user, however due to the degradedness, all the better users can also decode this message. Since the Gaussian source is successively refinable \cite{EquitzCover:91}, by combining an optimal Gaussian successive refinement source code with a Gaussian broadcast code that (asymptotically) achieves (\ref{eqn:capacity}), we have the following theorem.
\begin{theorem}
\label{theorem:innerbound}
\begin{align}
\hat{\mathcal{D}}(P,b)\subseteq \mathcal{D}(P,b).
\end{align}
\end{theorem}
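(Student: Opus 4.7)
My plan is to exhibit an explicit separation-based scheme whose achievable distortion tuple is any point in $\hat{\mathcal{D}}(P,b)$, by combining the Gaussian broadcast channel capacity characterization in (\ref{eqn:capacity}) with a Gaussian successive refinement source code. The two ingredients fit together because (i) in a degraded broadcast channel, user $k$ can decode all messages indexed $1,\ldots,k$, and (ii) the unit-variance Gaussian source is successively refinable under MSE, so these nested decoded messages can be produced by a single source code whose $k$-th refinement layer is a rate-distortion-optimal description at distortion $D_k$.

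Given a target $(D_1,\ldots,D_K)\in\hat{\mathcal{D}}(P,b)$, I would first define per-channel-use rates
\begin{align*}
R_1=\frac{1}{2b}\log\frac{1}{D_1},\qquad R_k=\frac{1}{2b}\log\frac{D_{k-1}}{D_k},\quad k=2,\ldots,K,
\end{align*}
which are non-negative by the monotonicity $D_1\geq D_2\geq\cdots\geq D_K$ inherited from the definition of $\hat{\mathcal{D}}(P,b)$. The cumulative rate delivered to user $k$ over $n=bm$ channel uses, normalized per source sample, is $b\sum_{j=1}^k R_j=\tfrac{1}{2}\log(1/D_k)$, which is exactly the Gaussian rate-distortion function at distortion $D_k$. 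Hence, by successive refinability of the Gaussian source \cite{EquitzCover:91}, there exists a nested source code whose $k$-th layer has (per-source-sample) rate $\tfrac{1}{2}\log(D_{k-1}/D_k)$ and whose truncation to the first $k$ layers attains expected MSE $D_k+o(1)$.

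Next I would verify that these $(R_1,\ldots,R_K)$ lie in the capacity region $\mathcal{C}$ of (\ref{eqn:capacity}). A direct computation gives
\begin{align*}
\exp\Bigl(2\sum_{j=1}^k R_j\Bigr)=\exp\Bigl(\frac{1}{b}\log\frac{1}{D_k}\Bigr)=D_k^{-1/b},
\end{align*}
so the capacity constraint reduces to $\sum_{k=1}^K\Delta N_k\,D_k^{-1/b}\leq P+N_1$, which is precisely the defining inequality of $\hat{\mathcal{D}}(P,b)$ in (\ref{eqn:innerbound}). Therefore any jointly typical Gaussian broadcast channel code with these rates exists asymptotically, and composing it with the successive refinement source code yields an $(m,n,P,d_1,\ldots,d_K)$ source-channel broadcast code with $d_k\leq D_k+\epsilon$ for all sufficiently large $m$.

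I do not expect any real obstacle: both the capacity characterization (\ref{eqn:capacity}) and the successive refinability of the Gaussian source are classical results, and the rate-matching computation is purely algebraic. The only mild subtlety is handling the boundary case $D_k=D_{k-1}$ (which forces $R_k=0$ and an empty refinement layer) and ensuring that the small gap $\epsilon$ needed in Definition \ref{def:distortionvector} can be absorbed by standard reliability and rate-distortion arguments on the joint code. These are handled routinely, so the full proof is little more than writing down the two codes and invoking the capacity/successive refinement theorems.
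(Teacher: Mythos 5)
Your proposal is correct and follows essentially the same route as the paper's own proof: assign cumulative rates $\sum_{j=1}^k R_j=\frac{1}{2b}\log(1/D_k)$, observe that $\exp\bigl(2\sum_{j=1}^k R_j\bigr)=D_k^{-1/b}$ turns the capacity constraint (\ref{eqn:capacity}) into exactly the defining inequality of $\hat{\mathcal{D}}(P,b)$, and invoke successive refinability of the Gaussian source together with a degraded-message broadcast code. Your write-up just spells out the algebra and the boundary cases that the paper leaves as ``straightforwardly seen.''
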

\begin{proof}
We wish to show that any $(D_1,D_2,\ldots,D_K)\in \hat{\mathcal{D}}(P,b)$ is indeed achievable. Using the separation scheme, we only need to show the channel rates $(R_1,R_2,\ldots,R_K)$ specified by
\begin{align}
D_k=\exp(-2b\sum_{j=1}^k R_j), \quad k=1,2,\ldots,K,
\end{align}
are achievable on this Gaussian broadcast channel. The non-negative vector $(R_1,R_2,\ldots,R_K)$ is uniquely determined by $(D_1,D_2,\ldots,D_K)$, and it is straightforwardly seen that it indeed satisfies the inequality in (\ref{eqn:capacity}). The proof is thus complete. 
\end{proof}

\subsection{An Outer Bound}

Next we derive a set of conditions that any achievable distortion vector $(D_1,D_2,\ldots,D_K)$ has to satisfy, {\em i.e.}, Theorem \ref{theorem:outerbound}.

\begin{proof}[Proof of Theorem \ref{theorem:outerbound}]
Let us first introduce a set of auxiliary random variables, defined as
\begin{align}
U_k=S+V_k,\quad k=1,2,\ldots,K-1,
\end{align}
where $V_k$'s are zero Gaussian random variables, with variance $\tau_k$, and furthermore
\begin{align}
V_k=V_{k+1}+\Delta V_k,\quad k=1,2,\ldots,K-1,
\end{align}
where $\Delta V_k$ is a zero-mean Gaussian random variable, independent of everything else, with variance $\Delta\tau_k=\tau_k-\tau_{k+1}$. For convenience, we define $U_K=S$, which implies $\tau_{K}\triangleq 0$; furthermore, define $U_0\triangleq 0$, \textit{i.e.}, being a constant. This technique of introducing auxiliary random variables beyond those in the original problem was previously used in \cite{Ozarow:80,ReznicFederZamir:06,Tian:08} to derive outer bounds, and specifically in \cite{Tian:08} more than one random variable was introduced, whereas in \cite{Ozarow:80,ReznicFederZamir:06} only one was introduced. 

For any encoding and decoding functions, we consider a quantity which bears some similarity to the expression for the Gaussian broadcast channel capacity (\ref{eqn:capacity}), and we denote this quantity as $E_{f,g}(\tau_1,\tau_2,\ldots,\tau_{K-1})$ due to its sum exponential form
\begin{align}
&E_{f,g}(\tau_1,\tau_2,\ldots,\tau_{K-1})\triangleq \nonumber\\
&\qquad\sum_{k=1}^K \Delta N_k \exp\left[\frac{2}{n}\sum_{j=1}^k I(U^m_j;Y^n_j|U^m_1,U^m_2,\ldots,U^m_{j-1})\right].
\end{align}
The subscript $(f,g)$ makes it clear that this quantity depends on the specific encoding and decoding functions. Next we shall derive universal upper and lower bounds for this quantity regardless the specific choice of functions $(f,g)$, which eventually yield an outer bound for $\mathcal{D}(P,b)$. 

Let $(f,g)$ be any encoding and decoding functions that (asymptotically) achieve the distortions $(D_1,D_2,\ldots,D_K)$. 
We first derive a lower bound for $E_{f,g}(\tau_1,\tau_2,\ldots,\tau_{K-1})$. Observe that for $j=2,3,\ldots,K$,
\begin{align}
&I(U^m_j;Y^n_j|U^m_1,U^m_2,\ldots,U^m_{j-1})\nonumber\\
&\qquad=I(U^m_j;Y^n_j)-I(U^m_{j-1};Y^n_j)\nonumber\\
&\qquad\qquad\geq \frac{m}{2}\log\frac{(1+\tau_j)(D_j+\tau_{j-1})}{(1+\tau_{j-1})(D_j+\tau_{j})},\label{eqn:difference}
\end{align}
where the equality is due to the Markov string $U^m_{1}\leftrightarrow U^m_{2}\leftrightarrow\ldots\leftrightarrow U^m_{K-1}\leftrightarrow S^m\leftrightarrow X^n\leftrightarrow Y^n_j$, and the inequality is by Lemma \ref{lemma:difference}. Moreover, also by Lemma \ref{lemma:difference}, we have 
\begin{align}
I(U^m_1;Y^n_1)\geq  \frac{m}{2}\log\frac{1+\tau_1}{D_1+\tau_1}. \label{eqn:firstone}
\end{align}
It follows that
\begin{align}
&\sum_{j=1}^k I(U^m_j;Y^n_j|U^m_1,U^m_2,\ldots,U^m_{j-1})\nonumber\\
&\geq\frac{m}{2}\log\frac{1+\tau_1}{D_1+\tau_1}+\frac{m}{2}\sum_{j=2}^k \log\frac{(1+\tau_j)(D_j+\tau_{j-1})}{(1+\tau_{j-1})(D_j+\tau_{j})}\nonumber\\
&=\frac{m}{2}\log\frac{1+\tau_k}{D_1+\tau_1}+\frac{m}{2}\sum_{j=2}^k \log\frac{D_j+\tau_{j-1}}{D_j+\tau_{j}}
\end{align}
Summarizing the above bounds, we have 
\begin{align}
\label{eqn:lowerboundE}
&E_{f,g}(\tau_1,\tau_2,\ldots,\tau_{K-1})\nonumber\\
&\,\geq \sum_{k=1}^K \Delta N_k \exp\left[ \frac{1}{b}\log\frac{1+\tau_k}{D_1+\tau_1}+\frac{1}{b}\sum_{j=2}^k \log\frac{D_j+\tau_{j-1}}{D_j+\tau_{j}}\right].
\end{align}

Next we turn to upper-bounding $E_{f,g}(\tau_1,\tau_2,\ldots,\tau_{K-1})$, and first write the following.
\begin{align}
&\frac{2}{n}\sum_{j=1}^k I(U^m_j;Y^n_j|U^m_1,U^m_2,\ldots,U^m_{j-1})\nonumber\\
&=\frac{2}{n}\sum_{j=1}^k \left[I(U^m_j;Y^n_j)-I(U^m_{j-1};Y^n_j)\right]\nonumber\\
&=\frac{2}{n}\sum_{j=1}^k \left[h(Y^n_j|U^m_{j-1})-h(Y^n_j|U^m_j)\right]\nonumber\\
&=\frac{2}{n}\sum_{j=1}^k h(Y^n_j|U^m_{j-1})-\frac{2}{n}\sum_{j=1}^k h(Y^n_j|U^m_j).
\end{align}
Applying the entropy power inequality \cite{CoverThomas} for $j=1,2,\ldots,K-1$, we have 
\begin{align}
&\exp\left[\frac{2}{n}h(Y^n_j|U^m_j)\right]\nonumber\\
&\geq \exp\left[\frac{2}{n}h(Y^n_{j+1}|U^m_j)\right]+\exp\left[\log(2\pi e\Delta N_j)\right]\nonumber\\
&=\exp\left[\frac{2}{n}h(Y^n_{j+1}|U^m_j)\right]+2 \pi e\Delta N_j. \label{eqn:applyentropypower}
\end{align}
For $j=K$, it is clear that
\begin{align}
&\exp\left[\frac{2}{n}h(Y^n_K|U^m_K)\right]=\exp\left[\frac{2}{n}h(Y^n_K|S^m)\right]\nonumber\\
&\qquad\qquad\qquad\qquad\qquad= 2 \pi eN_K=2 \pi e \Delta N_K.
\end{align}
By defining $\exp\left[\frac{2}{n}h(X^n|S^m)\right]\triangleq 0$, it now follows that 
\begin{align}
&E_{f,g}(\tau_1,\tau_2,\ldots,\tau_{K-1})\nonumber\\
&=\sum_{k=1}^K \Delta N_k \exp\left[\frac{2}{n}\sum_{j=1}^k I(U^m_j;Y^n_j|U^m_1,U^m_2,\ldots,U^m_{j-1})\right]\nonumber\\
&\leq \sum_{k=1}^K \Delta N_k \frac{\exp\left[\frac{2}{n}\sum_{j=1}^k h(Y^n_j|U^m_{j-1})\right]}{\prod_{j=1}^k\left[\exp(\frac{2}{n}h(Y^n_{j+1}|U^m_j))+2 \pi e\Delta N_j\right]}.
\end{align}
We bound this summation, by considering the summands in the reversed order, \textit{i.e.}, $k=K,K-1,\ldots,1$.
Starting with the summands when $k=K-1$ and $k=K$, we have (\ref{eqn:firststep}) as given on the top of next page
\begin{figure*}[bt]
\normalsize
\setcounter{MYtempeqncnt}{\value{equation}}
\addtocounter{MYtempeqncnt}{1}
\begin{align}
&\Delta N_{K-1} \frac{\exp\left[\frac{2}{n}\sum_{j=1}^{K-1} h(Y^n_j|U^m_{j-1})\right]}{\prod_{j=1}^{K-1}\left[\exp(\frac{2}{n}h(Y^n_{j+1}|U^m_j))+2 \pi e\Delta N_j\right]}+\Delta N_{K} \frac{\exp\left[\frac{2}{n}\sum_{j=1}^{K} h(Y^n_j|U^m_{j-1})\right]}{\prod_{j=1}^{K}\left[\exp(\frac{2}{n}h(Y^n_{j+1}|U^m_j))+2 \pi e\Delta N_j\right]}\nonumber\\
&=\frac{\exp\left[\frac{2}{n}\sum_{j=1}^{K-1} h(Y^n_j|U^m_{j-1})\right]}{\prod_{j=1}^{K-1}\left[\exp(\frac{2}{n}h(Y^n_{j+1}|U^m_j))+2 \pi e\Delta N_j\right]}\left[\Delta N_{K-1}+\Delta N_{K} \frac{\exp\left[\frac{2}{n}h(Y^n_K|U^m_{K-1})\right]}{2 \pi e\Delta N_K}\right]\nonumber\\
&=\frac{1}{2 \pi e}\frac{\exp\left[\frac{2}{n}\sum_{j=1}^{K-1} h(Y^n_j|U^m_{j-1})\right]}{\Pi_{j=1}^{K-2}\left[\exp(\frac{2}{n}h(Y^n_{j+1}|U^m_j))+2 \pi e\Delta N_j\right]}.\label{eqn:firststep}
\end{align}
\hrulefill
\end{figure*}
\setcounter{equation}{\value{MYtempeqncnt}}
Continuing this line of reduction, we finally arrive at (\ref{eqm:upperboundE}) when $k=1$ 
\begin{figure*}[bt]
\normalsize
\setcounter{MYtempeqncnt}{\value{equation}}
\addtocounter{MYtempeqncnt}{1}
\begin{align}
E_{f,g}(\tau_1,\tau_2,\ldots,\tau_{K-1})\leq 
&\Delta N_1 \frac{\exp\left[\frac{2}{n} h(Y^n_1)\right]}{\exp\left[\frac{2}{n}h(Y^n_{2}|U^m_1))\right]+2 \pi e\Delta N_1}+\frac{1}{2 \pi e}\frac{\exp\left[\frac{2}{n}\sum_{j=1}^{2} h(Y^n_j|U^m_{j-1})\right]}{\left[\exp(\frac{2}{n}h(Y^n_{2}|U^m_1))+2 \pi e\Delta N_1\right]}\nonumber\\
&=\frac{\exp\left[\frac{2}{n} h(Y^n_1)\right]}{\exp\left[\frac{2}{n}h(Y^n_{2}|U^m_1))\right]+2 \pi e\Delta N_1}\left[\Delta N_1+\frac{\exp \left[\frac{2}{n}h(Y^n_2|U^m_1)\right]}{2\pi e}\right]\nonumber\\
&=\frac{\exp\left[\frac{2}{n} h(Y^n_1)\right]}{2\pi e}\leq P+N_1,\label{eqm:upperboundE}
\end{align}
\hrulefill
\end{figure*}
\setcounter{equation}{\value{MYtempeqncnt}}
where the last inequality is by the concavity of the $\log(\cdot)$ function and the given power constraint.

Combining (\ref{eqn:lowerboundE}) and (\ref{eqm:upperboundE}), it is clear that for any encoding and decoding functions $(f,g)$
\begin{align}
&P+N_1\geq E_{f,g}(\tau_1,\tau_2,\ldots,\tau_{K-1})\nonumber\\
&\geq \sum_{k=1}^K \Delta N_k \exp\left[ \frac{1}{b}\log\frac{1+\tau_k}{D_1+\tau_1}+\frac{1}{b}\sum_{j=2}^k \log\frac{D_j+\tau_{j-1}}{D_j+\tau_{j}}\right],
\end{align}
which completes the proof.
\end{proof}

The meaning of the newly introduced random variable $U_k$ can be roughly understood as the message meant for the $k$-th user. Under this interpretation, the term $\frac{1}{n}I(U^m_k;Y^n_k|U^m_1,U^m_2,\ldots,U^m_{k-1})$ in the quantity $E_{f,g}(\tau_1,\tau_2,\ldots,\tau_{K-1})$ essentially represents the individual rate intended for the $k$-th user in the Gaussian broadcast channel; this informal understanding provides the rationale for bounding $E_{f,g}(\tau_1,\tau_2,\ldots,\tau_{K-1})$. This interpretation is nevertheless not completely accurate, and thus the outer bound is likely to be not tight in general, but suffices to provide approximate characterizations.

\subsection{The Approximate Characterizations}

Now we are ready to prove Theorem \ref{theorem:maintheorem} and Proposition \ref{proposition:firstcorollary}. 
\begin{proof}[Proof of Theorem \ref{theorem:maintheorem} and Proposition \ref{proposition:firstcorollary}]
The first inclusion $\hat{\mathcal{D}}(P,b)\subseteq \mathcal{D}(P,b)$ in Theorem \ref{theorem:maintheorem} is simply Theorem \ref{theorem:innerbound}, and thus we focus on the other inclusion $\mathcal{D}(P,b)\subseteq \underline{\mathcal{D}}^*(P,b)\cap \underline{\mathcal{D}}(P,b)$, for which we prove $\mathcal{D}(P,b)\subseteq \underline{\mathcal{D}}^*(P,b)$ and $\mathcal{D}(P,b)\subseteq \underline{\mathcal{D}}(P,b)$ separately. From Theorem \ref{theorem:outerbound}, it is clear that if $(D_1,D_2,\ldots,D_K)\in \mathcal{D}(P,b)$, then (\ref{eqn:outerbound}) holds with any $\tau_1\geq \tau_2\geq \ldots\geq \tau_{K-1}\geq 0$, and thus (\ref{eqn:outerbound}) holds when we choose $\tau_k=D_k$ for $k=1,2,\ldots,K-1$. It follows that the following condition has to be satisfied by any achievable distortion vector
\begin{align}
\sum_{k=1}^K \Delta N_k \left[\frac{(1+D_k)\prod_{j=2}^k(D_j+D_{j-1})}{\prod_{j=1}^k (D_j+D_j)}\right]^{\frac{1}{b}}\leq P+N_1.
\end{align}
However, notice that
\begin{align}
&\sum_{k=1}^K \Delta N_k \left[\frac{(1+D_k)\prod_{j=2}^k(D_j+D_{j-1})}{\prod_{j=1}^k (D_j+D_j)}\right]^{\frac{1}{b}}\nonumber\\
&\qquad\geq \sum_{k=1}^K \Delta N_k \left[\frac{\prod_{j=2}^kD_{j-1}}{\prod_{j=1}^k 2D_j}\right]^{\frac{1}{b}}=\sum_{k=1}^K \Delta N_k (2^kD_k)^{-\frac{1}{b}}.
\end{align}
It now follows straightforwardly that any achievable distortion vector has to satisfy 
\begin{align}
\sum_{k=1}^K \Delta N_k (2^kD_k)^{-\frac{1}{b}}\leq P+N_1,
\end{align}
and $\mathcal{D}(P,b)\subseteq \underline{\mathcal{D}}^*(P,b)$ is proved.

To prove $\mathcal{D}(P,b)\subseteq \underline{\mathcal{D}}(P,b)$, note again that any achievable distortion vector has to satisfy Theorem \ref{theorem:outerbound}, and because of the similarity between the forms given in (\ref{eqn:secondouterbound}) and (\ref{eqn:outerbound}), we only need to prove
\begin{align}
\frac{1}{KD_k}\leq \frac{(1+\tau_k)\prod_{j=2}^k(D_j+\tau_{j-1})}{\prod_{j=1}^k (D_j+\tau_j)},\quad k=1,2,\ldots,K, \label{eqn:Kfactor}
\end{align}
for some specific choice of $\tau_1\geq \tau_2\geq \ldots \geq \tau_{K-1}\geq 0$. We first consider the case that $D_1\leq 1/K$; the case that $D_1>1/K$ needs to be treated in a slightly different manner, as we shall discuss later. 
We take an induction approach, for which the following auxiliary quantities are needed
\begin{align}
\alpha_k= \frac{D_{k}(1+\tau_{k})}{D_{k+1}+\tau_{k}},\qquad k=1,2,\ldots,K-1.
\end{align}
We claim that there exist $\tau_1\geq \tau_2\geq \ldots \geq \tau_{K-1}\geq 0$ such that (\ref{eqn:Kfactor}) holds with equality for $k=1,2,\ldots,K-1$, and holds for $k=K$ with equality or inequality; moreover with these $\tau_k$'s we have $\alpha_k\leq (K-k)^{-1}$. First consider the case $k=1$, since the function 
\begin{align}
\Phi_k(\tau)=\frac{1+\tau}{D_k+\tau}
\end{align}
is monotonically decreasing and continuous in the range $[0,\infty]$, as long as 
\begin{align}
\Phi_1(0)=\frac{1}{D_1} \geq \frac{1}{KD_1}\geq 1=\Phi_1(\infty),
\end{align}
there exists a unique solution of $\tau_1$ such that (\ref{eqn:Kfactor}) holds with equality. This is indeed true for $D_1\leq 1/K$, which gives
\begin{align}
\tau_1=\frac{(K-1)D_1}{1-KD_1}.
\end{align}
It follows that 
\begin{align}
\alpha_1=\frac{D_{1}(1+\tau_{1})}{D_{2}+\tau_{1}}=&\frac{D_1(1-D_1)}{D_2(1-KD_1)+(K-1)D_1}\nonumber\\
&\leq \frac{D_1}{(K-1)D_1}=\frac{1}{K-1},
\end{align}
and thus our claim holds for $k=1$. Next suppose the claim is true for $k=k^*$ and we prove it is also true $k=k^*+1$, for which we wish to find $\tau_{k^*+1}$ such that
\begin{align}
&\frac{1}{KD_{k^*+1}}=\frac{(1+\tau_{k*+1})\prod_{j=2}^{k^*+1}(D_j+\tau_{j-1})}{\prod_{j=1}^{k^*+1} (D_j+\tau_j)}\nonumber\\
&=\frac{(1+\tau_{k*})\prod_{j=2}^{k^*}(D_j+\tau_{j-1})}{\prod_{j=1}^{k^*} (D_j+\tau_j)}\frac{(1+\tau_{k*+1})(D_{k^*+1}+\tau_{k^*})}{(1+\tau_{k*})(D_{k^*+1}+\tau_{k^*+1})}\nonumber\\
&=\frac{1}{KD_{k^*}}\frac{(1+\tau_{k*+1})(D_{k^*+1}+\tau_{k^*})}{(1+\tau_{k*})(D_{k^*+1}+\tau_{k^*+1})},\label{eqn:kstarequal}
\end{align}
where the last equality is by the supposition that the claim holds true for $k^*$. Again by the monotonicity and continuity of $\Phi_k(\tau)$, as long as the choice of $(\tau_1,\tau_2,\ldots,\tau_{k^*})$ satisfies 
\begin{align}
&\Phi_{k^*+1}(0)=\frac{1}{D_{k^*+1}}\geq \frac{D_{k^*}(1+\tau_{k^*})}{D_{k^*+1}(D_{k^*+1}+\tau_{k^*})}\nonumber\\
&\qquad\geq \frac{1+\tau_{k^*}}{D_{k^*+1}+\tau_{k^*}}=\Phi_{k^*+1}(\tau_{k^*}),\label{eqn:inductioninequality}
\end{align}
there exists a valid solution $\tau_{k^*+1}$ in $[0,\tau_{k^*}]$ for (\ref{eqn:kstarequal}) to hold. The second inequality in (\ref{eqn:inductioninequality}) is clearly true, and thus we only need to consider the first inequality. However, notice that
\begin{align}
&\frac{D_{k^*}(1+\tau_{k^*})}{D_{k^*+1}(D_{k^*+1}+\tau_{k^*})}=\frac{\alpha_{k^*}}{D_{k^*+1}}\nonumber\\
&\qquad\qquad\leq \frac{1}{(K-k^*)D_{k^*+1}}\leq \frac{1}{D_{k^*+1}}, 
\end{align}
and we thus conclude that there indeed exists a valid solution of $\tau_{k^*+1}$ for (\ref{eqn:kstarequal}), or more precisely 
\begin{align}
\tau_{k^*+1}=\frac{D_{k^*+1}(1-\alpha_{k^*})}{\alpha_{k^*}-D_{k^*+1}}.
\end{align}
To bound $\alpha_{k^*+1}$, we write
\begin{align}
&\alpha_{k^*+1}=\frac{D_{k^*+1}(1+\tau_{k^*+1})}{D_{k^*+2}+\tau_{k^*+1}}\leq\frac{D_{k^*+1}(1+\tau_{k^*+1})}{\tau_{k^*+1}}\nonumber\\
&\qquad=\frac{\alpha_{k^*}(1-D_{k^*+1})}{1-\alpha_{k^*}}\leq \frac{\alpha_{k^*}}{1-\alpha_{k^*}}\nonumber\\
&\qquad\qquad\qquad\leq \frac{(K-k^*)^{-1}}{1-(K-k^*)^{-1}}=\frac{1}{K-k^*-1},
\end{align}
where the last inequality is by the monotonicity of $x/(1-x)$ in $[0,1)$, and the fact $\alpha_{k^*}\leq (K-k^*)^{-1}$. The induction is thus complete. 
It only remains to check that when $k=K$ the inequality (\ref{eqn:Kfactor}) still holds, for which we have
\begin{align}
&\frac{\prod_{j=2}^{K}(D_j+\tau_{j-1})}{D_K\prod_{j=1}^{K-1} (D_j+\tau_j)}\nonumber\\
&=\frac{(1+\tau_{K-1})\prod_{j=2}^{K-1}(D_j+\tau_{j-1})}{\prod_{j=1}^{K-1} (D_j+\tau_j)}\frac{D_{K}+\tau_{K-1}}{(1+\tau_{K-1})D_{K}}\nonumber\\
&=\frac{1}{KD_{K-1}}\frac{D_{K}+\tau_{K-1}}{(1+\tau_{K-1})D_{K}}=\frac{1}{\alpha_{K-1}KD_{K}}\geq\frac{1}{KD_{K}},
\end{align}
where the last inequality is by the fact $\alpha_{K-1}\leq 1$. We have thus proved that $\mathcal{D}(P,b)\subseteq \underline{\mathcal{D}}(P,b)$ for the case $D_1\leq 1/K$. 

Next we briefly discuss the case $D_1\geq D_2\geq \ldots \geq D_r>1/K\geq D_{r+1}\geq \ldots\geq D_K$, and we shall prove that (\ref{eqn:Kfactor}) holds for some $\tau_1\geq \tau_2\geq \ldots \geq \tau_{K-1}\geq 0$. Notice that by choosing $\tau_1,\tau_2,\ldots,\tau_r$ sufficiently large, we can make 
\begin{align}
\frac{(1+\tau_k)\prod_{j=2}^k(D_j+\tau_{j-1})}{\prod_{j=1}^k (D_j+\tau_j)}\geq \frac{1}{KD_k},\quad k=1,2,\ldots,r. \label{eqn:LHSRHS}
\end{align}
because of the strict inequality given in $D_1\geq D_2\geq \ldots \geq D_r>1/K$, and the fact that the left-hand-side of (\ref{eqn:LHSRHS}) goes to $1$ when we send  $\tau_1,\tau_2,\ldots,\tau_r$ to infinity. We have also $\frac{1}{K-r}\geq \frac{1}{K}\geq D_{r+1}\geq \ldots\geq D_K$, and thus there exist $\tau_{r+1}\geq \ldots\geq \tau_{K-1}$ such that
\begin{align}
\frac{1}{(K-r)D_k}\leq \frac{(1+\tau_k)\prod_{j=r+2}^k(D_j+\tau_{j-1})}{\prod_{j=r+1}^k (D_j+\tau_j)}
\end{align}
holds with equality for $k=r+1,r+2,\ldots,K-1$ and with either equality or inequality for $k=K$, by applying the result for the previously discussed case in a system with $K-r$ users. Since we can choose $\tau_1,\tau_2,\ldots,\tau_r$ sufficiently large, it is clear that indeed this set of $\tau_k$'s makes (\ref{eqn:Kfactor}) hold for $k=r+1,r+2,\ldots,K$. This completes the proof for $\mathcal{D}(P,b)\subseteq \underline{\mathcal{D}}(P,b)$.


To prove Proposition \ref{proposition:firstcorollary}, we need to choose different values for $(\tau_1,\tau_2,\ldots,\tau_{K-1})$. Essentially, when the condition $2D_{k}\leq D_{k-1}$ is not satisfied for certain $k$, we will choose to ignore the contribution of $D_{k}$ in the outer bound of Theorem \ref{theorem:outerbound} by taking an appropriate value of $\tau_{k}$. 
For convenience, define the set $\mathcal{B}=\{k:B_k= 1\}$, where $B_k$ is the labeling function given before Proposition \ref{proposition:firstcorollary}; denote the member of $\mathcal{B}$ in an increasing order as $k_1,k_2,\ldots,k_{|\mathcal{B}|}$, where $|\mathcal{B}|$ is the cardinality of the set $\mathcal{B}$. The value of $\tau_k$'s are set by the following recursive formula
\begin{align}
\tau_k&=\left \{\begin{array}{cc}D_{k}& \mbox{if } B_k=1\\ \tau_{k-1}&\mbox{if }B_k=0\end{array} \right.,\quad k=1,2,\ldots,K-1,
\end{align}
where we have defined $\tau_0\triangleq 1$ for convenience.

Note that $\tau_{k+1}\leq \tau_{k}$ for $k=1,2,\ldots,K-2$. It follows from Theorem \ref{theorem:outerbound} that this achievable distortion vector has to satisfy (\ref{eqn:conditions}) on the top of next page,
\begin{figure*}[bt]
\normalsize
\setcounter{MYtempeqncnt}{\value{equation}}
\addtocounter{MYtempeqncnt}{1}
\begin{align}
&\sum_{k=1}^K \Delta N_k \left[\frac{(1+\tau_k)\prod_{j=2}^k(D_j+\tau_{j-1})}{\prod_{j=1}^k (D_j+\tau_j)}\right]^{\frac{1}{b}}\nonumber\\
&\qquad\qquad\qquad=N_1-N_{k_1}+\sum_{i=1}^{|\mathcal{B}|} (N_{k_i}-N_{k_{i+1}}) \left[\frac{(1+D_{k_i})\prod_{j=2}^{i}(D_{k_j}+D_{k_{j-1}})}{\prod_{j=1}^{i} (D_{k_j}+D_{k_j})}\right]^{\frac{1}{b}}\leq P+N_1,\label{eqn:conditions}
\end{align}
\hrulefill
\end{figure*}
\setcounter{equation}{\value{MYtempeqncnt}}
where for convenience we define $k_{|\mathcal{B}|+1}=K+1$; to see the equality holds, note that the terms for $k\notin \mathcal{B}$ are combined with the terms of $k\in \mathcal{B}$, because this choice of $(\tau_1,\tau_2,\ldots,\tau_{K-1})$ cancels out some of terms in the numerator and the denominator. This implies that the distortion vector has to satisfy
\begin{align}
P+N_1&\geq N_1-N_{k_1}+\sum_{i=1}^{|\mathcal{B}|} (N_{k_i}-N_{k_{i+1}}) (2^iD_{k_i})^{-\frac{1}{b}}\nonumber\\
&=N_1-N_{k_1}+\sum_{i=1}^{|\mathcal{B}|} (N_{k_i}-N_{k_{i+1}}) (D^*_{k_i})^{-\frac{1}{b}}\nonumber\\
&=\sum_{k=1}^{K} \Delta N_{k} (D^*_k)^{-\frac{1}{b}},
\end{align}
where the last two equalities are by the definition of $D^*_k$. The proof can now be completed by applying Theorem \ref{theorem:innerbound}.
\end{proof}

\section{Gaussian Sources on General Broadcast Channels}
\label{sec:general}

In this section, we show that the results for sending Gaussian sources on Gaussian broadcast channels can be conveniently extended to general broadcast channels, which was inspired by a recent work by Avestimehr, Caire and Tse \cite{Avestimehr:08}. 

The broadcast channel is now given by an arbitrary conditional distribution $P(Y_1,Y_2,...,Y_K|X)$, in the alphabets $(\mathcal{Y}_1,\mathcal{Y}_2,\ldots,\mathcal{Y}_K,\mathcal{X})$. For brevity, we omit repeating the definition of the codes here. To distinguish from the Gaussian channel case, we denote the achievable distortion region as $\mathcal{D}_g(b)$, with a bandwidth mismatch factor $b$. For the separation-based scheme, we shall consider combining successive refinement source codes with broadcast codes with degraded message set \cite{KornerMarton:77}. Particularly, for an arbitrary permutation $\pi:\{1,2,\ldots,K\}\rightarrow\{1,2,\ldots,K\}$, the degraded message set requirement implies that there are a total of $K$ independent messages $(W_1,W_2,\ldots,W_K)$, such that the user $\pi(k)$ should decode the messages $W_1,W_2,\ldots,W_k$. For an arbitrary permutation $\pi$, let us denote the achievable distortion by the separation-based approach of combining successive refinement source code with the broadcast code with degraded message set as $\hat{\mathcal{D}}^{\pi}_g(b)$, and the overall achievable distortion region using this separation-based approach is given by
\begin{align}
\hat{\mathcal{D}}_g(b)=\bigcup_{\pi}\hat{\mathcal{D}}^{\pi}_g(b).
\end{align}
Clearly the convex closure of the above region is also achievable, however such generality is not required. 

It is worth noting that since the characterization of the broadcast channel capacity region with degraded message set is still an open problem for $K>2$, we do not have a characterization for the region $\hat{\mathcal{D}}_g(b)$. However, if the broadcast channel is degraded, then only one permutation needs to be considered; moreover, if the capacity region of the broadcast channel (with degraded message set or it is a degraded broadcast channel) is known, such as for the Gaussian broadcast channel case, the region $\hat{\mathcal{D}}_g(b)$ can then be completely characterized.

Now we present the counterpart of Corollary \ref{corollary:firstcorollary}, Proposition \ref{proposition:firstcorollary}, Corollary \ref{corollary:infiniteusers} and Corollary \ref{corollary:secondcorollary} for general broadcast channels. 

\begin{corollary}
\label{corollary:zerocorollaryprime}
If $(D_1,D_2,\ldots,D_K)\in \mathcal{D}_g(P,b)$, and if $D_k\geq 2D_{k+1}$ for $k=1,2,\ldots,K-1$, then $(2D_1,2^2D_2,\ldots,2^KD_K)\in \hat{\mathcal{D}}_g(P,b)$.
\end{corollary}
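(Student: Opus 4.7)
My plan is to extend the proof of Corollary \ref{corollary:firstcorollary} to general broadcast channels by isolating the one step in the Gaussian argument that relied on Gaussianity of the channel and replacing it with a channel-independent alternative. Recall that the Gaussian derivation of $\mathcal{D}(P,b)\subseteq\underline{\mathcal{D}}^*(P,b)$ sandwiched the exponential-sum quantity $E_{f,g}$: the lower bound came from Lemma \ref{lemma:difference} (source-side only), while the upper bound $E_{f,g}\leq P+N_1$ used the entropy power inequality. The lower bound is channel-agnostic and carries over verbatim; only the EPI-based step needs a replacement.

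Given any source-channel code $(f,g)$ attaining $(D_1,\dots,D_K)$, I would introduce the same auxiliary chain $U^m_k=S^m+V^m_k$ from the proof of Theorem \ref{theorem:outerbound}, with the specific choice $\tau_k=D_k$. Viewing $(U^m_1,\dots,U^m_{K-1},X^n)$ as super-letters in the $n$-fold product of the broadcast channel, the Markov string $U^m_1\to\cdots\to U^m_{K-1}\to S^m\to X^n$ gives a valid auxiliary-variable structure inside the K\"orner--Marton \cite{KornerMarton:77} inner bound for broadcast channels with degraded message set, under the identity permutation $\pi(k)=k$. Dividing by $n$ to pass to per-channel-use rates, the tuple
\begin{align*}
R^{(k)}\triangleq\tfrac{1}{n}\,I\bigl(U^m_k;Y^n_k\bigm|U^m_1,\dots,U^m_{k-1}\bigr),\quad k=1,\dots,K,
\end{align*}
with the convention $U^m_K\triangleq S^m$, is asymptotically achievable for the original broadcast channel under $\pi$. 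Applying the source-side bound of Lemma \ref{lemma:difference} term-by-term and telescoping exactly as in the proof of $\mathcal{D}(P,b)\subseteq\underline{\mathcal{D}}^*(P,b)$ then yields
\begin{align*}
\sum_{j=1}^{k}R^{(j)}\;\geq\;\frac{1}{2b}\log\frac{1}{2^{k}D_{k}}.
\end{align*}

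To close the loop, I would feed these rates into a Gaussian successive refinement code. Since the Gaussian source is successively refinable \cite{EquitzCover:91}, a cumulative rate of $\tfrac{1}{2b}\log(1/(2^{k}D_{k}))$ per channel use suffices for user $k$ to reach distortion $2^{k}D_{k}$; the hypothesis $D_k\geq 2D_{k+1}$ ensures that the resulting incremental refinement rates $\tfrac{1}{2b}\log(D_{k-1}/(2D_{k}))$ are non-negative, so a valid successive refinement code exists. Pairing this source code with the degraded-message-set broadcast code identified above gives a separation-based scheme placing $(2D_1,2^2D_2,\dots,2^KD_K)$ in $\hat{\mathcal{D}}^\pi_g(P,b)\subseteq\hat{\mathcal{D}}_g(P,b)$, as required.

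The main obstacle lies in the second paragraph: making rigorous the claim that the $n$-letter mutual-information increments $R^{(k)}$ are genuinely achievable on the original single-letter broadcast channel, rather than only on its $n$-fold extension. In the spirit of Avestimehr--Caire--Tse \cite{Avestimehr:08}, this should be handled by running a standard superposition/K\"orner--Marton encoding on the product channel with the $U^m_k$ serving as codebook-layer labels, together with a Fano-type argument controlling the residual error probability; some care will be needed to align the source blocklength $m$ with the channel blocklength $n=bm$ so that the two codes interlock without introducing any non-vanishing rate loss.
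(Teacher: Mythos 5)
Your overall route is the same as the paper's: view the induced mapping from $S^m$ to $Y^n_k$ as a block-memoryless super-broadcast-channel, build a superposition (degraded message set) code whose layers are the auxiliary variables $U^m_k$ with $\tau_k=D_k$, lower-bound the cumulative layer rates by telescoping Lemma \ref{lemma:difference}, and feed the resulting rates into a Gaussian successive refinement code. The obstacle you flag at the end (passing from $n$-letter mutual informations to a code on the original channel) is the routine part: the paper simply spans the superposition code over $n'$ independent super-blocks and lets $n'\to\infty$.

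The genuine gap is elsewhere, in your second paragraph. Declaring the tuple $R^{(k)}=\tfrac{1}{n}I(U^m_k;Y^n_k\mid U^m_1,\dots,U^m_{k-1})$ achievable "inside the K\"orner--Marton inner bound" is unjustified: with degraded message sets under the identity permutation, every receiver $k'\geq k$ must decode layer $k$, so the layer-$k$ rate must not exceed $\tfrac{1}{n}I(U^m_k;Y^n_{k'}\mid U^m_{k-1})$ for \emph{all} $k'\geq k$, and since the channel is not assumed degraded there is no a priori reason these cross terms dominate the term at receiver $k$ that you set the rate equal to. The paper closes exactly this hole: it sets the layer rates not to the mutual informations but to the Lemma \ref{lemma:difference} lower bounds $R^m_j=\tfrac{m}{2}\log\tfrac{(1+\tau_j)(D_j+\tau_{j-1})}{(1+\tau_{j-1})(D_j+\tau_j)}$, and then applies Lemma \ref{lemma:difference} again at receiver $k$ (which achieves distortion $D_k\leq D_j$) together with the monotonicity of $\tfrac{D+\tau_{j-1}}{D+\tau_j}$ in $D$ to conclude $I(U^m_j;Y^n_k\mid U^m_{j-1})\geq R^m_j$ for every $k\geq j$; that is, with this input distribution the channel to $Y^n_k$ is effectively stronger than needed, even though the broadcast channel itself is not degraded. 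Your argument still goes through once you make this substitution (reduce the rates to the Lemma \ref{lemma:difference} bounds and verify decodability at the better receivers via the distortion ordering), and the final distortions $2^kD_k$ together with the hypothesis $D_k\geq 2D_{k+1}$ then give a valid successive refinement code as you state; without it, the claimed broadcast code is not known to exist.
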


\begin{proposition}
\label{proposition:firstcorollaryprime}
Let $(D^*_1,D^*_2,\ldots,D^*_K)$ be the relaxed distortion vector of $(D_1,D_2,\ldots,D_K)$ as defined in (\ref{eqn:enhanceddistortion}). If $(D_1,D_2,\ldots,D_K)\in \mathcal{D}_g(b)$, then $(D^*_1,D^*_2,\ldots,D^*_K)\in \hat{\mathcal{D}}_g(b)$.
\end{proposition}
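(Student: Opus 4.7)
The plan is to adapt the proof of Proposition \ref{proposition:firstcorollary} by replacing the channel-side entropy-power-inequality step, which is Gaussian-specific, with a direct argument that extracts an achievable rate tuple for the broadcast channel with degraded message set from the joint source-channel code. Given any $(D_1,\ldots,D_K)\in\mathcal{D}_g(b)$ (with the convention $D_1\geq\cdots\geq D_K$), I would construct the set $\mathcal{B}=\{k_1<\cdots<k_{|\mathcal{B}|}\}$ and the relaxed distortions $D^*_k$ via (\ref{eqn:enhanceddistortion}), choose the permutation $\pi$ to match this user ordering, and introduce exactly the same source-side auxiliary random variables $U_{k_i}^m=S^m+V_{k_i}^m$ as in the proof of Proposition \ref{proposition:firstcorollary}, with the $V_{k_i}^m$ jointly Gaussian of variance $\tau_{k_i}=D_{k_i}$ nested into a Markov chain.

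Next, I would apply Lemma \ref{lemma:difference} to each term and telescope exactly as in the Gaussian proof to obtain the cumulative mutual information bound
\begin{align*}
\sum_{j=1}^{i} I(U_{k_j}^m; Y_{k_j}^n \mid U_{k_1}^m,\ldots,U_{k_{j-1}}^m) \geq \frac{m}{2}\log\frac{1+D_{k_i}}{2^i D_{k_i}} \geq \frac{m}{2}\log\frac{1}{D^*_{k_i}},
\end{align*}
which, normalized by $n=bm$, says that the cumulative per-channel-use ``rate'' deliverable to user $k_i$ is at least $\frac{1}{2b}\log(1/D^*_{k_i})$---precisely the point-to-point rate needed for a Gaussian successive-refinement source code to achieve distortion $D^*_{k_i}$.

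The central step is then to convert these block mutual informations into a single-letter achievable rate tuple for the broadcast channel with the degraded-message-set ordering $\pi$, so that the separation-based scheme (successive-refinement source code plus degraded-message-set broadcast code) can realize them. I would do this via a standard single-letterization: introduce a time-sharing variable $Q$ uniform on $\{1,\ldots,n\}$, define single-letter auxiliaries out of $(U_{k_i}^m,Q)$ possibly augmented by past channel outputs, verify the Markov structure $V_1\to\cdots\to V_{|\mathcal{B}|}\to X_Q$, and appeal to the K\"orner-Marton-type inner bound for broadcast channels with degraded message set; only achievability, not a tight capacity characterization, is needed. Combined with the successive refinability of the Gaussian source \cite{EquitzCover:91}, this would give $(D^*_1,\ldots,D^*_K)\in\hat{\mathcal{D}}^\pi_g(b)\subseteq\hat{\mathcal{D}}_g(b)$.

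The hard part will be this single-letterization. The natural choice $V_i=(U_{k_i}^m,Q)$ preserves the required Markov chain cleanly but computes its single-letter mutual information as $\frac{1}{n}\sum_q I(U_{k_i}^m;Y_{k_i,q}\mid\cdot)$, whereas the block quantity expands by the chain rule into $\frac{1}{n}\sum_q I(U_{k_i}^m;Y_{k_i,q}\mid\cdot,Y_{k_i}^{q-1})$, differing by conditioning on past outputs of user $k_i$. Enriching $V_i$ with appropriate past outputs recovers the correct mutual information for a single user, but must be done carefully so that the cross-user conditioning does not violate the Markov chain demanded by the inner bound; a Csisz\'ar-sum-type rearrangement is the standard tool I would try.
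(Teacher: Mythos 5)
There is a genuine gap, and it sits exactly where you flagged it: the single-letterization. Your argument correctly reuses the source-side auxiliaries $U^m_{k_i}=S^m+V^m_{k_i}$ and Lemma \ref{lemma:difference}, but then tries to extract a \emph{single-letter} rate tuple (time-sharing variable $Q$, auxiliaries built from $(U^m_{k_i},Q)$ and past outputs, a K\"orner--Marton-type inner bound) and leaves the crucial step --- reconciling $\frac{1}{n}\sum_q I(U^m_{k_i};Y_{k_i,q}\mid\cdot)$ with the chain-rule expansion conditioned on past outputs, while preserving the Markov chain across \emph{all} receivers simultaneously --- as something you ``would try'' with a Csisz\'ar-sum manipulation. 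That step is not routine: the conditioning that fixes the rate expression for one receiver generally breaks the degradation/Markov structure needed for the other receivers, and for $K>2$ there is not even a known single-letter capacity region for degraded message sets to target. As written, the proof is incomplete at its central step. A secondary issue: your displayed cumulative bound sums conditional mutual informations evaluated at \emph{different} receivers' outputs $Y^n_{k_j}$; to justify that user $k_i$ can actually decode layers $1,\ldots,i$ you need the per-layer conditions $I(U^m_{k_j};Y^n_{k_i}\mid U^m_{k_{j-1}})\geq R^m_j$ at user $k_i$'s \emph{own} output for every $j\leq i$, which must be checked separately (it follows from the monotonicity of $(D+\tau_{j-1})/(D+\tau_j)$ in $D$ together with $D_{k_i}\leq D_{k_j}$), and your proposal does not carry this out.

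The paper avoids single-letterization entirely, and this is the key idea you are missing. Because $\hat{\mathcal{D}}_g(b)$ is defined \emph{operationally} (combine a successive-refinement source code with \emph{any} degraded-message-set broadcast code --- no single-letter capacity formula is invoked, precisely because none is known for $K>2$), one may view the mapping induced by the given joint source-channel code from $S^m$ to $(Y^n_1,\ldots,Y^n_K)$ as a block-memoryless ``super-broadcast-channel'' and build a \emph{multi-letter} superposition code on it directly, spanning many length-$m$/length-$n$ super-blocks, with layer inputs drawn according to $U^m_1,\ldots,U^m_K$ and layer rates $R^m_1=\frac{m}{2}\log\frac{1+\tau_1}{D_1+\tau_1}$, $R^m_j=\frac{m}{2}\log\frac{(1+\tau_j)(D_j+\tau_{j-1})}{(1+\tau_{j-1})(D_j+\tau_j)}$ given by Lemma \ref{lemma:difference}; per-layer decodability at each better receiver is the monotonicity check above, and the resulting cumulative rates, fed to a Gaussian successive-refinement source code with the $\tau$-choice dictated by (\ref{eqn:enhanceddistortion}) (namely $\tau_k=D_k$ when $B_k=1$ and $\tau_k=\tau_{k-1}$ otherwise), yield exactly the relaxed distortions $D^*_k$. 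This is the route of the paper's proof of Corollary \ref{corollary:secondcorollaryprime}, of which the present proposition is stated to be a minor variant; to repair your proof, replace the single-letterization attempt with this super-channel construction.
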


\begin{corollary}
For an infinite number of users indexed by $x$ with $\inf_{\{D_x\}\in\mathcal{D}_g(b)}\inf\{D_x\}\geq d_{\mbox{\small{min}}}>0$, let $\{D^*_x\}$ be the relaxed distortion vector of $\{D_x\}$ as defined in (\ref{eqn:enhanceddistortion}). If $\{D_x\}\in \mathcal{D}_g(b)$, then $\{D^*_x\}\in \hat{\mathcal{D}}_g(b)$, and furthermore, $\sup\{\frac{D_x}{D^*_x}\}\leq \frac{4}{d_{\mbox{\tiny{min}}}}$. 
\label{corollary:firstcorollaryprime}
\end{corollary}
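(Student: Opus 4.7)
\begin{num_proof}
\textbf{Proof proposal.} My plan is to reduce the infinite-user situation to the finite case covered by Proposition \ref{proposition:firstcorollaryprime}, exploiting the fact that the recursion (\ref{eqn:enhanceddistortion}) can generate only finitely many ``active'' indices once a positive floor $d_{\mbox{\tiny{min}}}$ is imposed. The key observation is that $B_x=1$ occurs only when the current distortion halves the most recent kept reference, so iterating from $\sup\{D_x\}\leq 1$ down to $\inf\{D_x\}\geq d_{\mbox{\tiny{min}}}$ produces at most $L\triangleq \lfloor \log_2(1/d_{\mbox{\tiny{min}}})\rfloor+1$ active levels, no matter how the index $x$ is chosen. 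This is the infinite-index analogue of Lemma \ref{lemma:maximumgroups} and yields a finite list of distinct values $D^*_{x_1}>D^*_{x_2}>\cdots>D^*_{x_L}$ to which every $D^*_x$ is equal.

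Next, I would apply Proposition \ref{proposition:firstcorollaryprime} to the reduced $L$-user system formed by the representatives $x_1,\ldots,x_L$: starting from $\{D_{x_i}\}\in\mathcal{D}_g(b)$ (which follows from $\{D_x\}\in\mathcal{D}_g(b)$ by taking a subset), the proposition supplies a source--channel separation code that achieves distortion $D^*_{x_i}$ at each $x_i$. Because this code combines a successive refinement Gaussian source code with a broadcast code having a degraded message set, any user $x$ with $D^*_x=D^*_{x_i}$ can be assigned the same decoded message layer as $x_i$ in the broadcast channel ordering, and will therefore attain distortion $D^*_{x_i}$. Thus $\{D^*_x\}\in \hat{\mathcal{D}}_g(b)$.

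The multiplicative bound then follows directly: by construction $D^*_x\leq 2^{1+\sum_{y}B_y}D_x$, and since only the active levels contribute to $\sum_y B_y$, the infinite-index version of Lemma \ref{lemma:maximumgroups} gives $\sum_{y}B_y\leq \log_2(\sup\{D_x\})-\log_2(\inf\{D_x\})+1\leq \log_2(1/d_{\mbox{\tiny{min}}})+1$, so that
\begin{align*}
\sup\Big\{\tfrac{D^*_x}{D_x}\Big\}\leq 2^{2+\log_2(1/d_{\mbox{\tiny{min}}})}=\tfrac{4}{d_{\mbox{\tiny{min}}}}.
\end{align*}

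The main obstacle I anticipate is making the reduction from a possibly uncountable family $\{D_x\}$ to the finite set of representatives fully rigorous, in particular verifying that the degraded-message-set broadcast code supports multiple (possibly uncountably many) users sharing a common active level without disturbing the achievability guarantee of Proposition \ref{proposition:firstcorollaryprime}. Once this grouping step is justified (which is automatic in the Gaussian case via the degraded structure and straightforward for general broadcast channels once a consistent permutation $\pi$ ordering the representatives is fixed), the rest of the argument is a direct application of the already-established finite-user proposition together with the counting bound on $\sum_x B_x$.
\end{num_proof}
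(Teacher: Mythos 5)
Your counting step (finitely many active levels, $\sum_x B_x\leq \log_2(1/d_{\min})+1$, hence the factor $4/d_{\min}$) is exactly the paper's Lemma \ref{lemma:maximumgroups} and is fine. The genuine gap is the step you yourself flag as the ``main obstacle'' and then wave away as automatic: extending achievability from the $L$ representatives to the remaining (possibly uncountably many) users. Proposition \ref{proposition:firstcorollaryprime}, invoked as a black box, only delivers a degraded-message-set broadcast code whose decodability is guaranteed for the $L$ receivers it was designed for. For a \emph{general} broadcast channel there is no degraded structure, and fixing a permutation $\pi$ over the representatives does nothing for a non-representative receiver $Y_x$: its channel marginal is different from that of $x_i$, and nothing in the statement of the proposition controls whether it can decode the layers intended for $x_i$. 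So the assertion that every user with $D^*_x=D^*_{x_i}$ ``can be assigned the same decoded message layer'' is precisely what must be proved, and your reduction does not prove it; the claim $\{D^*_x\}\in\hat{\mathcal{D}}_g(b)$ is therefore not established for general channels (it would be fine for the Gaussian/degraded case).

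The missing argument is the actual content of the paper's proof of Corollary \ref{corollary:secondcorollaryprime}, which the paper states carries over to this corollary: view the given joint source-channel code as a super-broadcast channel, build the superposition code from the auxiliary variables $U^m_j$ with per-block rates $R^m_j$ determined by the choice $\tau_k=D_k$ if $B_k=1$ and $\tau_k=\tau_{k-1}$ otherwise (the choice used to prove Proposition \ref{proposition:firstcorollary}), and then verify decodability \emph{for each user $x$ individually}: since user $x$ has a decoder achieving distortion $D_x\leq D_{x_j}$ for every level $j$ up to its own, Lemma \ref{lemma:difference} applied with $W=Y^n_x$ gives
\begin{align*}
I(U^m_j;Y^n_x|U^m_{j-1})\geq \frac{m}{2}\log\frac{(1+\tau_j)(D_x+\tau_{j-1})}{(1+\tau_{j-1})(D_x+\tau_{j})}\geq \frac{m}{2}\log\frac{(1+\tau_j)(D_{x_j}+\tau_{j-1})}{(1+\tau_{j-1})(D_{x_j}+\tau_{j})}=R^m_j,
\end{align*}
by monotonicity of $(D+\tau_{j-1})/(D+\tau_j)$ in $D$. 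This is what allows arbitrarily many users to share a layer with no degradedness assumption, and by your counting only finitely many distinct layers occur, so the construction closes and the $4/d_{\min}$ bound follows as you computed. With this argument in place, the detour through the finite-user proposition is unnecessary; without it (or an equivalent), your proof does not go through.
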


\begin{corollary}\label{corollary:secondcorollaryprime}
If $(D_1,D_2,\ldots,D_K)\in \mathcal{D}_g(P,b)$, then $(KD_1,KD_2,\ldots,KD_K)^{+}_K\in \hat{\mathcal{D}}_g(P,b)$.
\end{corollary}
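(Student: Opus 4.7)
The plan is to mirror the proof of Corollary \ref{corollary:secondcorollary}, retaining the source-side argument (which only used Lemma \ref{lemma:difference}) and replacing the Gaussian channel-side argument (which relied on the entropy power inequality and the explicit broadcast capacity) by a direct invocation of the broadcast code with degraded message set. Concretely, given any $(D_1,D_2,\ldots,D_K)\in\mathcal{D}_g(b)$ and an achieving encoder $f$/decoders $g_1,\ldots,g_K$, introduce the same auxiliary variables $U_k=S+V_k$ with $V_k$ zero-mean Gaussian of variance $\tau_k$ and $\tau_1\geq\tau_2\geq\ldots\geq\tau_{K-1}\geq 0$ as in the proof of Theorem \ref{theorem:outerbound}, preserving the Markov chain $U^m_1\leftrightarrow U^m_2\leftrightarrow\cdots\leftrightarrow U^m_{K-1}\leftrightarrow S^m\leftrightarrow X^n\leftrightarrow Y^n_j$.

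Define $R_k\triangleq \frac{1}{n}I(U^m_k;Y^n_k|U^m_1,\ldots,U^m_{k-1})$ (with $U_K=S$, $U_0$ constant). The derivation (\ref{eqn:difference})--(\ref{eqn:lowerboundE}) uses only Lemma \ref{lemma:difference} and the above Markov chain, so it applies verbatim for an arbitrary broadcast channel, yielding
\begin{align*}
\sum_{j=1}^k R_j\geq \frac{1}{2b}\log\frac{1+\tau_k}{D_1+\tau_1}+\frac{1}{2b}\sum_{j=2}^k\log\frac{D_j+\tau_{j-1}}{D_j+\tau_j}.
\end{align*}
Substituting the specific $(\tau_1,\ldots,\tau_{K-1})$ constructed in the proof of $\mathcal{D}(P,b)\subseteq\underline{\mathcal{D}}(P,b)$ in Section \ref{sec:proof} (which turned this right-hand side into exactly $\frac{1}{2b}\log\frac{1}{KD_k}$), we obtain $\sum_{j=1}^k R_j\geq \frac{1}{2b}\log\frac{1}{KD_k}$ for every $k=1,2,\ldots,K$.

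The rate tuple $(R_1,\ldots,R_K)$ so defined is, by its very construction from a block code on the broadcast channel, an achievable rate tuple for the channel $P(Y_1,\ldots,Y_K|X)$ under the degraded message set requirement for the identity permutation $\pi(k)=k$ (each user $k$ can decode $W_1,\ldots,W_k$): one combines the source-channel code block with superposition decoding of the auxiliaries, taking $m,n\to\infty$ with $b=n/m$ fixed. Since the Gaussian source is successively refinable \cite{EquitzCover:91}, user $k$ with total rate at least $\frac{1}{2b}\log\frac{1}{KD_k}$ per channel use (i.e., $\frac{1}{2}\log\frac{1}{KD_k}$ per source sample) can reconstruct the source at squared-error distortion $KD_k$, or trivially at distortion $1$ when $KD_k\geq 1$. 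Hence $(KD_1,KD_2,\ldots,KD_K)^+_K\in\hat{\mathcal{D}}^{\pi}_g(b)\subseteq\hat{\mathcal{D}}_g(b)$.

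The main obstacle is the third step: unlike the Gaussian case, where the explicit capacity region (\ref{eqn:capacity}) of the degraded broadcast channel directly matched the functional form of the inner bound $\hat{\mathcal{D}}(P,b)$, the capacity region of a general broadcast channel with degraded message set is not single-letter characterized for $K>2$. The proof must therefore argue directly at the multi-letter level that the $R_k$'s, being mutual informations arising from a valid joint distribution with the required Markov structure, are simultaneously achievable under the identity-permutation degraded message set requirement, rather than through an explicit capacity formula — this is the key place where the approach of Avestimehr--Caire--Tse \cite{Avestimehr:08} is invoked.
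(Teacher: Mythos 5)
Your overall strategy coincides with the paper's (same auxiliaries $U_k$, same choice of $\tau$'s from the proof of $\mathcal{D}(P,b)\subseteq\underline{\mathcal{D}}(P,b)$, Lemma \ref{lemma:difference} on the source side, successive refinability at the end), but the step you yourself flag as ``the main obstacle'' is precisely where your argument has a genuine gap, and with your choice of rates it cannot be closed. You define $R_j=\frac{1}{n}I(U^m_j;Y^n_j|U^m_1,\ldots,U^m_{j-1})$ and assert that this tuple is achievable with degraded message sets for the identity permutation ``by its very construction.'' Superposition coding over the multi-letter super-channel induced by the joint source-channel code requires every receiver $k\geq j$ to decode layer $j$, i.e., it requires $I(U^m_j;Y^n_k|U^m_{j-1})\geq I(U^m_j;Y^n_j|U^m_{j-1})$. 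For a general (non-degraded) broadcast channel there is no reason this holds: the only thing known about receiver $k$ is that its decoder attains distortion $D_k\leq D_j$, which by Lemma \ref{lemma:difference} lower-bounds $I(U^m_j;Y^n_k|U^m_{j-1})$, but the actual value $I(U^m_j;Y^n_j|U^m_{j-1})$ may exceed that bound (e.g., $Y_j$ may be more informative than $Y_k$ while the given scheme simply reconstructs poorly at user $j$; the ordering of the users is by distortion only). So the rate tuple you wrote down need not be simultaneously achievable under the identity-permutation degraded message set requirement, and invoking \cite{Avestimehr:08} does not by itself supply the missing inequality.

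The paper closes exactly this gap by choosing the code rates to be the Lemma \ref{lemma:difference} lower bounds themselves, $R^m_1=\frac{m}{2}\log\frac{1+\tau_1}{D_1+\tau_1}$ and $R^m_j=\frac{m}{2}\log\frac{(1+\tau_j)(D_j+\tau_{j-1})}{(1+\tau_{j-1})(D_j+\tau_j)}$ for $j\geq 2$, rather than the true conditional mutual informations at receiver $j$. Decodability of layer $j$ at every receiver $k\geq j$ then follows from Lemma \ref{lemma:difference} applied at receiver $k$, which gives $I(U^m_j;Y^n_k|U^m_{j-1})\geq\frac{m}{2}\log\frac{(1+\tau_j)(D_k+\tau_{j-1})}{(1+\tau_{j-1})(D_k+\tau_j)}\geq R^m_j$, the last step by monotonicity of $(D+\tau_{j-1})/(D+\tau_j)$ in $D$ together with $D_k\leq D_j$; this is the entire content of the multi-letter argument. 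Nothing is lost by this substitution, since the source side only needed lower bounds on the cumulative rates: with the same $\tau$'s the separation scheme attains $D^*_k=\exp(-2\sum_{j\le k}R^m_j/m)=\frac{\prod_{j=1}^k(D_j+\tau_j)}{(1+\tau_k)\prod_{j=2}^k(D_j+\tau_{j-1})}\leq KD_k$, and the rest of your argument (successive refinement, truncation at distortion $1$) then goes through as in the paper.
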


We only prove Corollary \ref{corollary:secondcorollaryprime} here, since the proofs of Corollary \ref{corollary:zerocorollaryprime}, Proposition \ref{proposition:firstcorollaryprime} and Corollary \ref{corollary:firstcorollaryprime} are quite similar.

\begin{proof}
Assume a distortion vector $(D_1,D_2,\ldots,D_K)$ is indeed achievable with certain joint source-channel coding scheme. Let us view the induced random mapping from $\mathcal{S}^m$ to $\mathcal{Y}^n_i$, $i=1,2,...,K$, by this joint source-channel code as a super-broadcast-channel; without loss of generality, let us assume $D_1\geq D_2\geq \ldots\geq D_K>0$. 

We pick up the story from (\ref{eqn:difference}) and (\ref{eqn:firstone}), and claim that there exists a degraded message set broadcast code on the super-broadcast-channel with asymptotic rate per each length-$m$ block as follows
\begin{align}
&R^m_1=\frac{m}{2}\log\frac{1+\tau_1}{D_1+\tau_1},\nonumber\\
&R^m_j=\frac{m}{2}\log\frac{(1+\tau_j)(D_j+\tau_{j-1})}{(1+\tau_{j-1})(D_j+\tau_{j})}, \qquad j=2,3,\ldots,K.
\end{align}

It is not difficult to see that the distribution $U^m_1,U^m_2,\ldots,U^m_K,X^m$ can be used to construct a well-known super-position code with the above rates \cite{CoverThomas} by (\ref{eqn:difference}) and (\ref{eqn:firstone}). This code needs to span over $n'$ blocks, and note that the induced super-broadcast-channel is block-wise memoryless. We only need to confirm that receiver $k$ can decode all the messages up to the $k$-th layer by successive decoding. Observe that since
\begin{align}
I(U^m_1;Y^n_k)\geq \frac{m}{2}\frac{1+\tau_1}{D_k+\tau_1}\geq \frac{m}{2}\frac{1+\tau_1}{D_1+\tau_1}=R^m_1,
\end{align}
indeed receiver $k$ can decoder the first layer code, and thus recover the codewords based on $U_1$. The above inequality essentially shows that the channel to $Y^n_k$ is more powerful than that to $Y^n_1$, with the given channel input distribution, although the broadcast channel itself is not necessarily degraded. For the $j$-th layer where $j\leq k$, we have
\begin{align}
&I(U^m_j;Y^n_k|U^m_{j-1})\geq \frac{m}{2}\log\frac{(1+\tau_j)(D_k+\tau_{j-1})}{(1+\tau_{j-1})(D_k+\tau_{j})}\nonumber\\
&\qquad\qquad\qquad\geq \frac{m}{2}\log\frac{(1+\tau_j)(D_j+\tau_{j-1})}{(1+\tau_{j-1})(D_j+\tau_{j})}=R^m_j,
\end{align}
because of the monotonicity of the function
\begin{align}
f(D)=\frac{D+\tau_{j-1}}{D+\tau_{j}},
\end{align}
when $\tau_{j-1}\geq \tau_j$, and the fact $D_k\leq D_j$. Thus our claim is indeed true. 

Using this set of degraded message set broadcast channel codes, we can achieve (or more precisely, approach arbitrarily close to) the following distortion $D^*_k$ for any $\tau_1\geq \tau_2\geq \ldots\geq \tau_{K-1}\geq 0$, and $\tau_K\triangleq 0$,
\begin{align}
D^*_k=\exp(-2\sum_{j=1}^k\frac{R^m_j}{m})=\frac{\prod_{j=1}^{k} (D_j+\tau_j)}{(1+\tau_{k})\prod_{j=2}^{k}(D_j+\tau_{j-1})}.
\end{align}
However, in the proof of Theorem \ref{theorem:maintheorem}, we have already showed that there exist $\tau$'s such that (\ref{eqn:Kfactor}) holds, {\em i.e.},
\begin{align} 
D^*_k\leq KD_k, \qquad k=1,2,\ldots,K.
\end{align}
The proof is thus complete. 
\end{proof}

\section{Concluding Remarks}
\label{sec:conclusion}

We derived a new outer bound to the achievable distortion region for the joint source-channel coding problem of sending a Gaussian source on Gaussian broadcast channels with bandwidth mismatch. When combined with a simple separation-based achievability result, this new bound leads to approximate characterizations of the achievable distortion region within some universal constant multiplicative factors. These results are further extended to the case of Gaussian source broadcast on general broadcast channels.

The outer bound was not fully optimized, which seems to be a difficult problem by itself. It may be beneficial to investigate the outer bound more thoroughly when more powerful achievability schemes become available. In the current work, we only considered the separation-based scheme that yields approximate characterizations. 

The technique used in this work can be applied to the problem of multi-source broadcast on more complex communication networks under certain conditions.  We believe that similar results hold for many classes of suitably well-behaved networks. In a follow-up work to this paper, we have shown approximate
separation for a class of such networks along with other results on source-channel separation \cite{TianCDS:10}. 

The outer bounding technique of introducing auxiliary random variables used in this work is inspired by those used in \cite{Ozarow:80, ReznicFederZamir:06,Tian:08}. We believe this technique is also promising for other multi-user information theoretic problems. The role of the auxiliary random variable introduced in \cite{Ozarow:80, ReznicFederZamir:06} was not quite well understood or interpreted previously, and may even appear mysterious to many researchers less familiar with the specific problems being treated. The current work (see also \cite{Tian:08}) has made the meaning of the introduced random variables explicit. More specifically, they are introduced to either substitute the messages in the channel coding problem, or to substitute the source reconstructions in the source coding problem. By this substitution, the quantities representing rates are replaced by the corresponding information quantities. With the interpretation made clear in a general manner, it is our hope that this technique can inspire other meaningful results in the future.

\appendix
\label{appendix:lemmadifference}
\begin{proof}[Proof of Lemma \ref{lemma:difference}]
Define $Z'=V+V'$ and $Z=V$, and thus $U'=S+Z'$ and $U=S+Z$. To prove the first statement, we consider the following chain of inequalities
\begin{align*}
&I(W;U'^m)\\
&=mh(U')-h(U'^m|W)\\
&=mh(U')-h(S^m+Z'^m|W)\\
&=mh(U')-h(S^m+Z'^m-\hat{S}^m|W)\\
&\stackrel{(a)}{\geq} mh(U')-h(S^m+Z'^m-\hat{S}^m)\\
&\stackrel{(b)}{\geq} mh(U')-\sum_{i=1}^m h[S(i)+Z'(i)-\hat{S}(i)]\\
&\stackrel{(c)}{\geq} mh(U')\\
&\qquad-\sum_{i=1}^m\frac{1}{2}\log\left\{(2\pi e)\Expt[(S(i)+Z'(i)-\hat{S}(i))^2] \right\}\\
&= mh(U')-\sum_{i=1}^m\frac{1}{2}\log\left[(2\pi e)(\Expt d(S(i),\hat{S}(i))+\tau') \right],
\end{align*}
where $\hat{S}^m$ is the reconstruction with $W$, and its $i$-th position is denoted as $\hat{S}(i)$. The inequality (a) is because conditioning reduces entropy, (b) is because of the chain rule for differential entropy and the fact that conditioning reduces entropy, and (c) is because Gaussian distribution maximizes the differential entropy for a given second moment. 
Since $\log(\cdot)$ is a concave function, we have 
\begin{align*}
&\sum_{i=1}^m\frac{1}{2}\log\left[(2\pi e)(\Expt d(S(i),\hat{S}(i))+\tau') \right]\nonumber\\
&\qquad\leq \frac{m}{2}\log\left[2\pi e \left(\Expt d(S^n,\hat{S}^n)+\tau'\right)\right].
\end{align*}
It follows
\begin{align*}
I(W;U'^m)&\geq mh(U')-\frac{m}{2}\log\left[2\pi e \left(\Expt d(S^m,\hat{S}^m)+\tau'\right)\right]\\
&\geq mh(U')-\frac{m}{2}\log\left[2\pi e(D+\tau')\right]\\
&=\frac{m}{2}\log\frac{1+\tau'}{D+\tau'},
\end{align*}
which is the first claim in the lemma. 

To prove the second claim, we write the following
\begin{align*}
&I(W;U^m)-I(W;U'^m)\\
&\qquad\qquad=mh(U)-mh(U')+h(U'^m|W)-h(U^m|W).
\end{align*}
For the latter two terms, we have
\begin{align*}
h(U'^m|W)-h(U^m|W)&\stackrel{(a)}{=}h(U'^m|W)-h(U^m|V'^m,W)\\
&\stackrel{(b)}{=}h(U'^m|W)-h(U'^m|V'^m,W)\\
&=I(U'^m;V'^m|W),
\end{align*}
where (a) is because $V'^m$ is independent of $U^m$ and $W$; (b) is by the definition of $U'$. Continuing the chain of inequalities, we have
\begin{align*}
&I(U'^m;V'^m|W)\\
&\stackrel{(a)}{=}h(V'^m)-h(V'^m|S^m+V^m+V'^m,W)\\
&=h(V'^m)-h(V'^m|S^m+V^m+V'^m,\hat{S}^m,W)\\
&\stackrel{(b)}{\geq}h(V'^m)-h(V'^m|S^m-\hat{S}^m+V^m+V'^m)\\
&\stackrel{(c)}{\geq}\sum_{i=1}^m \Big{[}h(V'(i))-h\Big{(}V'(i)|S(i)-\hat{S}(i)+V(i)+V'(i)\Big{)}\Big{]}\\
&=\sum_{i=1}^m I\Big{(}V'(i);S(i)-\hat{S}(i)+V(i)+V'(i)\Big{)}\\
&\stackrel{(d)}{\geq}\sum_{i=1}^m \frac{1}{2}\log \frac{\Expt d(S(i),\hat{S}(i))+\tau'}{\Expt d(S(i),\hat{S}(i))+\tau}\\
&\stackrel{(e)}{\geq} \frac{m}{2}\log \frac{D+\tau'}{D+\tau},
\end{align*}
where (a) is because $V'$ is independent of $W$; (b) is because conditioning reduces entropy; (c) is by applying the chain rule, and the facts that $V'^m$ is an i.i.d. sequence and conditioning reduces entropy; (d) is by applying the mutual information game result that Gaussian noise is the worst additive noise under a variance constraint \cite{CoverThomas} (pg. 263, ex. 1), and taking $V'(i)$ as channel input; finally (e) is due to the convexity and monotonicity of $\log\frac{x+\tau'}{x+\tau}$ in $x\in (0,\infty)$ when $\tau'\geq \tau\geq 0$. This completes the proof for the second claim.
\end{proof}

\section*{Acknowledgments}
The authors would like to thank David Tse for the stimulating discussions at several occasions as well as his insightful comments. The authors are also grateful to the anonymous reviewers for their comments. 

\bibliographystyle{IEEEbib}

\begin{biographynophoto}{Chao Tian}(S'00, M'05) received the B.E. degree in Electronic Engineering from Tsinghua University, Beijing, China, in 2000 and the M.S. and Ph. D. degrees in Electrical and Computer Engineering from Cornell University, Ithaca, NY in 2003 and 2005, respectively. 

Dr. Tian was a postdoctoral researcher at Ecole Polytechnique Federale de Lausanne (EPFL) from 2005 to 2007. He joined AT\&T Labs--Research, Florham Park, New Jersey in 2007, where he is now a Senior Member of Technical Staff. His research interests include multi-user information theory, joint source-channel coding, quantization design and analysis, as well as image/video coding and processing.
\end{biographynophoto}

\begin{biographynophoto}{Suhas N. Diggavi}
(S'93, M'99) received the B. Tech. degree in electrical
engineering from the Indian Institute of Technology, Delhi, India, and
the Ph.D. degree in electrical engineering from Stanford University,
Stanford, CA, in 1998.

After completing his Ph.D., he was a Principal Member Technical Staff 
in the Information Sciences Center, AT\&T Shannon Laboratories, Florham
Park, NJ. Since then he had been in the faculty of the School of
Computer and Communication Sciences, EPFL, where he directed the
Laboratory for Information and Communication Systems (LICOS). He is
currently a Professor, in the Department of Electrical Engineering, at
the University of California, Los Angeles. His research interests
include wireless communications networks, information theory, network
data compression and network algorithms.

He is a recipient of the 2006 IEEE Donald Fink prize paper award, 2005
IEEE Vehicular Technology Conference best paper award and the Okawa
foundation research award.  He is currently an editor for ACM/IEEE
Transactions on Networking and IEEE Transactions on Information
Theory. He has 8 issued patents.

\end{biographynophoto}

\begin{biographynophoto}{Shlomo Shamai (Shitz)}(S'80, M'82, SM'89, F'94) 
received the B.Sc., M.Sc., and Ph.D. degrees in
electrical engineering from the Technion---Israel Institute of Technology,
in 1975, 1981 and 1986 respectively.

During 1975-1985 he was with the Communications Research Labs
in the capacity of a Senior Research Engineer. Since 1986 he is with
the Department of Electrical Engineering, Technion---Israel Institute of
Technology, where he is now the William Fondiller Professor of Telecommunications.
His research interests encompasses a wide spectrum of topics in information
theory and statistical communications.

Dr. Shamai (Shitz) is an IEEE Fellow, and the recipient of the 2011
Claude E. Shannon Award.
He is the recipient of the 1999 van der Pol Gold Medal of the Union Radio
Scientifique Internationale (URSI), and a
co-recipient of the 2000 IEEE Donald G. Fink Prize Paper Award, the 2003, and
the 2004 joint IT/COM societies paper award, the 2007 IEEE Information
Theory Society Paper Award, the 2009 European Commission FP7, Network of
Excellence in Wireless COMmunications (NEWCOM++) Best Paper Award,
and the 2010 Thomson Reuters Award for International Excellence
in Scientific Research. He is
He is also the recipient of
1985 Alon Grant for distinguished young scientists and the 2000 Technion Henry
Taub Prize for Excellence in Research.
He has served as Associate Editor for the {\sc Shannon Theory of the IEEE
Transactions on Information Theory}, and has also served on the Board of
Governors of the Information Theory Society.
\end{biographynophoto}

\end{document}